\documentclass[12pt]{amsart}

\usepackage{amscd,amssymb,amsmath,latexsym}
\usepackage[mathscr]{euscript}
\usepackage{epsfig}

\topmargin=.6in
\oddsidemargin=0in
\evensidemargin=0in
\textwidth=6.5in
\textheight=7.5in


\newcommand{\N}{\mathbb N}




\newcommand{\cB}{\mathcal B}


\newtheorem{Thm}{Theorem}[section]
\newtheorem{Cor}[Thm]{Corollary}

\newtheorem{Prop}[Thm]{Proposition}
\newtheorem{Rmk}[Thm]{Remark}
\newtheorem{Ex}[Thm]{Example}
\newtheorem{Def}[Thm]{Definition}

\begin{document}
\title{Quantum Kac's chaos}

\author{George Androulakis}
\address{Department of Mathematics, University of South Carolina, 
Columbia, SC 29208}
\email{giorgis@math.sc.edu}

\author{Rade Musulin}
\address{Department of Mathematics, University of South Carolina, 
Columbia, SC 29208}
\email{musulin@math.sc.edu}

\keywords{Kac's chaos, quantum Kac's chaos, empirical measure, Hartree equation.}

\subjclass{Primary: 81Q50, Secondary: 35Q83, 37K99.}

\thanks{The article is part of the second author's Ph.D. thesis which is prepared at the University of South Carolina under the supervision of the first author.}

\maketitle

\begin{abstract}
We study the notion of quantum Kac's chaos which was implicitly introduced by Spohn and explicitly formulated by Gottlieb. We prove the analogue of a result of Sznitman which gives the equivalence of Kac's chaos to $2$-chaoticity and to convergence of empirical measures. Finally we give a simple, different proof of a result of Spohn which states that chaos propagates with respect to certain Hamiltonians that define the evolution of the mean field limit for interacting quantum systems.
\end{abstract}

\section{The Motivation Behind Kac's Chaos}

The origins of chaos, as discussed in this paper, dates back to Kac. In 1956, Kac \cite{Kac} was interested in solving the non-linear integro-differential equation known as the Boltzmann equation \cite[Equation~(1.1)]{Kac}. The solution to the Boltzmann equation is a family $(f^{(N)})_{N=1}^\infty$ of probability density functions, where $f^{(N)}$ describes the velocities and positions of $N$ dilute gas molecules moving in $\mathbb{R}^3$, interacting via elastic binary collisions. The non-linearity of the Boltzmann equation provided difficulty in obtaining the existence of its solution. 

If the gas is restricted to a container of fixed volume, there are no external forces, and the number $N$ of molecules is assumed to be equidistributed, then $f^{(N)}$ depends on the velocities of the $N$ gas molecules and time, thus having $3N+1$ real variables.  Then the Boltzmann equation takes a simplified reduced form \cite[Equation~(1.3)]{Kac} which is still a non-linear integro-differential equation. Further assuming that the kinetic energy of the system remains constant proportional to $N$, the $3N$ variables representing velocity lie on a sphere of radius $\sqrt{N}$ in $\mathbb{R}^{3N}$, and in order to obtain a further simplified version of the Bolzmann equation, one can replace the $3N$ real variables by one real variable $x$. This further reduces the Boltzmann equation to the reduced Boltzmann equation
\cite[Equation~(3.5)]{Kac}:
\begin{equation}\label{reducedBoltzmanneq}
\frac{\partial f (x,t)}{\partial t} = \frac{\nu}{2 \pi} \int_{-\infty}^\infty \int_0 ^{2\pi} \{ f(x \cos \theta + y \sin \theta , t)
f(-x \sin \theta + y \cos \theta , t) - f(x,t)f(y,t) \} d \theta d y.
\end{equation} 

Kac further introduced a linear differential equation which he called the ``Master Equation'' \cite[Equation (2.6)]{Kac}. If $\phi^{(N)}$ is a solution to Kac's ``Master Equation'', $\phi_1^{(N)}$ and $\phi_2^{(N)}$ will denote the first and second marginals of $\phi^{(N)}$, respectively, i.e.
$$
\phi^{(N)}_1(x,t)= \int_{x_2^2+ \ldots + x_N^2 = N-x^2} \phi^{(N)}(x,x_2, \ldots , x_N, t) d\sigma_1 (x_2, \ldots , x_N)
$$
and
$$
\phi^{(N)}_2(x,y,t)= \int_{x_3^2+ \ldots + x_N^2 = N-x^2-y^2} \phi^{(N)}(x,y,x_3, \ldots , x_N, t) d\sigma_2 (x_2, \ldots , x_N)
$$ where $\sigma_1,\sigma_2$ are normalized uniform measures on the spheres of $\mathbb{R}^{N-1}$ and $\mathbb{R}^{N-2}$ respectively, centered at the origin and having radii $\sqrt{N-x^2}$ and $\sqrt{N-x^2-y^2}$ respectively. Kac \cite{Kac} noticed that if $\lim\limits_{N\rightarrow\infty}\phi_1^{(N)}(x,0)$ exists weakly in $L^1(\mathbb{R})$ and $\lim\limits_{N \rightarrow \infty}\phi_2^{(N)}(x,0)$ exists weakly in $L^1(\mathbb{R}^2)$, and 
\begin{eqnarray}\label{2cp10}
\lim\limits_{N \rightarrow\infty}\phi_2^{(N)}(x,y,0)=\lim\limits_{N \rightarrow\infty}\phi_1^{(N)}(x,0)\lim\limits_{N \rightarrow\infty}\phi_1^{(N)}(y,0),
\end{eqnarray} then the same limits exist at any later time $t$, and satisfy
\begin{eqnarray}\label{2cp1}
\lim\limits_{N \rightarrow\infty}\phi_2^{(N)}(x,y,t)=\lim\limits_{N \rightarrow\infty}\phi_1^{(N)}(x,t)\lim\limits_{N \rightarrow\infty}\phi_1^{(N)}(y,t).
\end{eqnarray} Then equation (\ref{2cp1}) implies that the function $f$ defined by $$f(x,t):=\lim\limits_{N \rightarrow \infty}\phi_1^{(N)}(x,t)$$ satisfies equation (\ref{reducedBoltzmanneq}). Hence Kac proved the existence of the solution to the reduced Boltzmann equation for $N=1$. Kac \cite{Kac} referred to the property in equation (\ref{2cp1}) for a fixed $t\geq 0$ as the ``Boltzmann property''. Whenever equation (\ref{2cp10}) implies equation (\ref{2cp1}) for all times $t>0$, we say that the ``Boltzmann property propagates in time''. Hence Kac \cite{Kac} proved that the Boltzmann property propagates in time for his ``Master Equation''.

Many authors including McKean \cite{McKean}, Johnson \cite{Johnson}, Tanaka \cite{Tanaka}, Ueno \cite{Ueno}, Gr\"{u}nbaum \cite{Grunbaum},      Graham and M\'{e}l\'{e}ard \cite{GrahamMeleard},
 Sznitman \cite{Sznitman}, Mischler \cite{Mischler}, Carlen, Carvalho and Loss \cite{CarlenCarvalhoLoss},
 Mischler and Mouhot \cite{MischlerMouhot} have abstracted the idea of the ``Boltzmann property'' to a sequence of probability measures on a topological space. Instead of having the ``Boltzmann property'', the sequence of probability measures nowadays are said to be chaotic. In order to discuss chaotic sequences of probability measures, these authors first define the notion of a symmetric probability measure.

\begin{Def} \label{Def:symmetricClassical}
Let $E$ be a topological space, $N$ be a positive integer, $\mu_N$ be a probability measure on the Borel subsets of $E^N$.
Then $\mu_N$ is called \textbf{symmetric} if for any $N$-many continuous scalar-valued bounded functions on $E$, $\phi_1,\phi_2,...,\phi_N$,
\begin{eqnarray*}
\int_{E^N} \phi_1(x_1)\phi_2(x_2) \cdots \phi_N(x_N) d \mu_N =\int_{E^N} \phi_1(x_{\pi(1)})\phi_2(x_{\pi(2)}) \cdots \phi_n (x_{\pi(N)}) d \mu_N
\end{eqnarray*}
for any permutation $\pi$ of $\{ 1, \ldots , N \}$.
\end{Def} A chaotic sequence of probability measures is then defined as follows.

\begin{Def} \label{Def:chaosClassical}
Let $E$ be a topological space, $\mu$ be a Borel probability measure on $E$, and for every $N \in \mathbb{N}$ let $\mu_N$ be a symmetric Borel probability measure on $E^N$. 
For $k\in\N$, we say that $(\mu_N)_{N=1}^\infty$ is \textbf{$k-\mu$-chaotic} if for every choice $\phi_1,\phi_2,...,\phi_k$ of continuous bounded scalar-valued functions on $E$, we have
$$
\lim\limits_{N \rightarrow \infty} \int_{E^N} \phi_1 ( x_1) \phi_2 (x_2) \cdots  \phi_k (x_k) d\mu_N = \prod\limits_{j=1}^k \int_E \phi_j (x) d \mu (x).
$$ We say that $(\mu_N)_{N=1}^\infty$ is \textbf{$\mu$-chaotic} if $(\mu_N)_{N=1}^\infty$ is $k-\mu$-chaotic for all $k \geq 1$.
\end{Def}

Boltzmann's equation and Equation (\ref{reducedBoltzmanneq}) describe evolutions in models of classical mechanics. Corresponding quantum mechanical models are 
described in \cite[V. Quantum Mechanical Models]{Spohn}. In such models, density functions are replaced by 
\textbf{density operators}, (positive operators of trace equal to $1$), which via the trace duality define states on algebras of 
bounded linear operators acting on Hilbert spaces. The corresponding notion to the chaotic sequences of probability measures,
as well as the corresponding notion to the propagation of chaos appears in \cite[Theorem~5.7]{Spohn} where the time evolution is given by a specific family of Hamiltonians. 
Gottlieb \cite{Gottlieb} formulated the notion of chaotic sequences of density operators. In the current article, we study the notion of chaos which was introduced by Spohn and formalized by Gottlieb. To honor the fact that the definition of chaos was originated by the work of Kac for classical models, we refer to its quantum version as ``quantum Kac's chaos''. We prove two main results in this article. The first result is our Theorem~\ref{MAIN} which is the analogue of 
\cite[Proposition~2.2(i)]{Sznitman}. The second result of this article is our Theorem~\ref{PropResult} which is a simpler, different proof of the propagation of chaos result of Spohn \cite[Theorem~5.7]{Spohn}. This result shows that chaos propagates in the mean field limit for interacting quantum systems.

\textbf{Notation:} Throughout this paper, $\mathbb{H}$ will denote an arbitrary Hilbert space, $\cB(\mathbb{H})$ will denote the set of bounded operators on $\mathbb{H}$, and $\mathcal{D}(\mathbb{H})$ will denote the set of density operators on $\mathbb{H}$. The identity operator on $\cB(\mathbb{H})$ will be denoted by $1$. For any operator $A \in \cB(\mathbb{H})$ and $k \in \mathbb{N}$, $A^{\otimes k}$ will denote the tensor product of $A$ with itself $k$ times. In addition, for any $A \in \cB(\mathbb{H})$, $||A||_\infty$ will denote the $\cB(\mathbb{H})$ norm of $A$. If $A$ is a trace class operator on $\mathbb{H}$, then $||A||_1$ will denote the trace class norm of $A$.

For any $k, N \in \N$ with $k\leq N$, and $\rho_N \in \mathcal{D}(\mathbb{H}^{\otimes N})$, we will denote by $\rho_N^{(k)} \in \mathcal{D}(\mathbb{H}^{\otimes k})$ the partial trace of $\rho_N$ where we trace out all but the first $k$ copies of $\mathbb{H}$. In addition, for an index set $A \subset \{1,...,N\}$, we will denote by $\text{tr}_A(\rho_N)$ the partial trace of $\rho_N$ where we trace out the copies of $\mathbb{H}$ indexed by elements of $A$. Notice that $\text{tr}_{[k+1,N]}(\rho_N)=\rho_N^{(k)}$.

Given a topological space $E$, we will denote by $M(E)$ the set of Borel probability measures on $E$. The set of continuous bounded real-valued functions on $E$ will be denoted by $C_b(E)$. Finally, for $N\in\N$, $\Sigma_N$ will denote the set of all permutations of the set $\{1,2,...,N\}$.
\vskip.1in
{\bf Acknowledgment:} The authors would like to thank Eric Carlen for bringing the topic of quantum Kac's chaos to their attention. They also would like to thank the referee for the insightful comments
and for bringing to their attention the references \cite{BardosGolseMauser} and \cite{PaulPulvirentiSimonella}. In particular, the existence of the inequality of Corollary~\ref{referee} was conjectured by the referee.

\section{Equivalent Statements of Quantum Kac's Chaos}

Sznitman used probabilistic methods to show existence \cite{Sznitman84} and uniqueness \cite{Sznitman} to the homogeneous Boltzmann equation. The next result was important in his proofs.
\begin{Prop}\cite[Proposition $2.2$]{Sznitman}\label{EquivClassical}
Let $E$ be a separable metric space, $(\mu_N)_{N=1}^\infty$ a sequence of symmetric probability measures on $E^N$, and $\mu$ be a probability measure on $E$. The following are equivalent:
\begin{enumerate}
\item[1.] The sequence $(\mu_N)_{N=1}^\infty$ is $\mu$-chaotic (as in Definition~\ref{Def:chaosClassical}).
\item[2.] The function $X_N:E^N \rightarrow M(E)$ defined by $X_N(x_1,...,x_N)=\dfrac{1}{N}\sum\limits_{i=1}^N \delta_{x_i}$ (where $\delta_x$ stands for the Dirac measure at $x$), converges in law with respect to $\mu_N$ to the constant random variable $\mu$, i.e. for every $g \in C_b(E)$ we have that
$$\int_{E^N}|(X_N-\mu)g|^2 d\mu_N \xrightarrow[N \rightarrow \infty]{}0.$$
\item[3.] The sequence $(\mu_N)_{N=1}^\infty$ is $2-\mu$-chaotic (as in Definition~\ref{Def:chaosClassical}).
\end{enumerate}
\end{Prop}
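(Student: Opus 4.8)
The plan is to establish the cycle of implications $1 \Rightarrow 3 \Rightarrow 2 \Rightarrow 1$. The implication $1 \Rightarrow 3$ is immediate, since being $\mu$-chaotic means being $k$-$\mu$-chaotic for every $k\in\N$ (Definition~\ref{Def:chaosClassical}), in particular for $k=2$. I also record at the outset that $2$-$\mu$-chaoticity forces $1$-$\mu$-chaoticity: taking $\phi_2\equiv 1\in C_b(E)$ in the defining limit collapses the right-hand product to $\int_E\phi_1\,d\mu$. The substance of the proof therefore lies in $3\Rightarrow 2$ (a second-moment computation) and in $2\Rightarrow 1$ (the main step), which upgrades control of the first two correlations to control of all $k$-point correlations.

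For $3\Rightarrow 2$, fix $g\in C_b(E)$ and, writing $X_N g=\frac1N\sum_{i=1}^N g(x_i)$ and $\mu g=\int_E g\,d\mu$, expand the square as
$$\int_{E^N}|(X_N-\mu)g|^2\,d\mu_N=\int_{E^N}(X_N g)^2\,d\mu_N-2(\mu g)\int_{E^N}X_N g\,d\mu_N+(\mu g)^2.$$
Splitting $(X_N g)^2=\frac{1}{N^2}\sum_{i,j}g(x_i)g(x_j)$ into diagonal ($i=j$) and off-diagonal ($i\neq j$) parts and using symmetry of $\mu_N$ gives
$$\int_{E^N}(X_N g)^2\,d\mu_N=\frac1N\int_{E^N}g(x_1)^2\,d\mu_N+\frac{N-1}{N}\int_{E^N}g(x_1)g(x_2)\,d\mu_N,$$
while $\int_{E^N}X_N g\,d\mu_N=\int_{E^N}g(x_1)\,d\mu_N$. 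The diagonal term is $O(1/N)$ since $g^2\in C_b(E)$; the off-diagonal term tends to $(\mu g)^2$ by $2$-$\mu$-chaoticity with $\phi_1=\phi_2=g$; and $\int_{E^N}g(x_1)\,d\mu_N\to\mu g$ by the $1$-$\mu$-chaoticity noted above. Collecting these, the right-hand side converges to $(\mu g)^2-2(\mu g)^2+(\mu g)^2=0$, which is exactly statement 2.

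For the main implication $2\Rightarrow 1$, fix $k\in\N$ and $\phi_1,\dots,\phi_k\in C_b(E)$ (reducing to real-valued test functions by multilinearity, splitting each into real and imaginary parts), and relate the target integral $\int_{E^N}\prod_{j=1}^k\phi_j(x_j)\,d\mu_N$ to $\int_{E^N}\prod_{j=1}^k(X_N\phi_j)\,d\mu_N$. Expanding $\prod_{j=1}^k(X_N\phi_j)=N^{-k}\sum_{i_1,\dots,i_k}\prod_{j=1}^k\phi_j(x_{i_j})$ and isolating the tuples with pairwise distinct indices, symmetry of $\mu_N$ shows that the distinct-index part integrates to $\frac{N(N-1)\cdots(N-k+1)}{N^k}\int_{E^N}\prod_{j=1}^k\phi_j(x_j)\,d\mu_N$, whereas the remaining tuples (those with at least one coincidence) number $O(N^{k-1})$ and hence contribute $O(1/N)$ because the $\phi_j$ are bounded. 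Since $\frac{N(N-1)\cdots(N-k+1)}{N^k}\to 1$, it suffices to prove $\int_{E^N}\prod_{j=1}^k(X_N\phi_j)\,d\mu_N\to\prod_{j=1}^k\mu\phi_j$.

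This last limit is where statement 2 enters: each $X_N\phi_j$ is bounded in modulus by $\|\phi_j\|_\infty$ uniformly in $N$, and statement 2 says $X_N\phi_j-\mu\phi_j\to 0$ in $L^2(\mu_N)$. The telescoping identity
$$\prod_{j=1}^k X_N\phi_j-\prod_{j=1}^k\mu\phi_j=\sum_{l=1}^k\Bigl(\prod_{j<l}X_N\phi_j\Bigr)(X_N\phi_l-\mu\phi_l)\Bigl(\prod_{j>l}\mu\phi_j\Bigr),$$
combined with the uniform bounds and Cauchy--Schwarz (estimating $\int_{E^N}|X_N\phi_l-\mu\phi_l|\,d\mu_N$ by the square root of the $L^2$ quantity in statement 2), makes each summand integrate to a quantity tending to $0$; this yields the desired convergence and hence $k$-$\mu$-chaoticity for every $k$, i.e. statement 1. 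I expect the main obstacle to be the bookkeeping in $2\Rightarrow 1$: correctly counting the distinct-index tuples, verifying that the coincidence terms are genuinely lower order, and arranging the telescoping so that only the first-moment $L^2$ control from statement 2 is ever invoked. Separability of $E$ plays only a background role, ensuring measurability of $X_N$ and good behavior of $M(E)$, and is not needed in the algebraic manipulations.
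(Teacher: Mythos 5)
Your proof is correct. The paper itself states this proposition without proof (citing Sznitman), but your argument is precisely the classical counterpart of the paper's own proof of its quantum analogue, Theorem~\ref{MAIN}: the same cycle of implications $1\Rightarrow 3\Rightarrow 2\Rightarrow 1$, the same second-moment expansion with diagonal/off-diagonal splitting and the observation that $2$-chaoticity implies $1$-chaoticity for $3\Rightarrow 2$, and the same two-step scheme for $2\Rightarrow 1$ (distinct-index counting to replace $\int\prod_j\phi_j(x_j)\,d\mu_N$ by $\int\prod_j X_N\phi_j\,d\mu_N$, then telescoping plus Cauchy--Schwarz).
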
 The main result of this section is to obtain a quantum analogue of Proposition~\ref{EquivClassical}. Instead of considering probability density functions, we consider density operators. We first have to extend the definition of symmetric measures (Definition~\ref{Def:symmetricClassical}) to density operators. The following is the quantum version of symmetry (Definition~\ref{Def:symmetricClassical}) we will use in this paper.
\begin{Def}\label{Symm}
Let $N \in \mathbb{N}$. A density operator $\rho_N \in \mathcal{D}(\mathbb{H}^{\otimes N})$ is \textbf{symmetric} if and only if for every 
$A_1,...,A_N \in \cB(\mathbb{H})$ and for every permutation $\pi\in\Sigma_N$,
\begin{eqnarray*}
\text{tr}(A_1 \otimes \cdot \cdot \cdot \otimes A_N \rho_N) = 
\text{tr}(A_{\pi(1)} \otimes \cdot \cdot \cdot \otimes A_{\pi(N)} \rho_N).
\end{eqnarray*}
\end{Def}
This is not the same formulation of the definition of symmetric density operators given by Gottlieb~\cite{Gottlieb}. To obtain the formulation given by Gottlieb~\cite{Gottlieb}, for $N \in \N$, define for each $\pi \in \Sigma_N$ the unitary operator $U_\pi^{[N]}\in\cB(H^{\otimes N})$ by 
\begin{eqnarray}\label{unitarty}
U^{[N]}_\pi(x_1\otimes\cdots\otimes x_N)=x_{\pi^{-1}(1)}\otimes\cdots\otimes x_{\pi^{-1}(N)}.
\end{eqnarray} A density operator $\rho_N \in \cB(\mathbb{H}^{\otimes N})$ is symmetric according to \cite{Gottlieb} if and only if $U_\pi^{[N]} \rho_N=\rho_N U_\pi^{[N]}$ for every $\pi \in \Sigma_N$.
However, Gottlieb's definition of symmetric densities is equivalent to Definition~\ref{Symm} as we show next.

\begin{Prop}\label{SymmEquivGottlieb}
Let $N \in \mathbb{N}$ and $\rho_N \in \mathcal{D}(\mathbb{H}^{\otimes N})$. Then $\rho_N$ is symmetric (as in Definition~\ref{Symm}) if and only if $U_\pi^{[N]} \rho_N = \rho_N U_\pi^{[N]}$ for all $\pi \in \Sigma_N$.
\end{Prop}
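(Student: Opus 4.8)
The plan is to reduce the statement to a single intertwining identity for the unitaries $U_\pi^{[N]}$ and then to exploit trace duality. First I would record how $U_\pi^{[N]}$ conjugates an elementary tensor of operators. A direct computation on a product vector $x_1\otimes\cdots\otimes x_N$, using the representation property $U_\pi^{[N]}U_\sigma^{[N]}=U_{\pi\sigma}^{[N]}$ together with the fact that each $U_\pi^{[N]}$ is unitary with $(U_\pi^{[N]})^{*}=U_{\pi^{-1}}^{[N]}$, yields $U_\pi^{[N]}(A_1\otimes\cdots\otimes A_N)(U_\pi^{[N]})^{*}=A_{\pi^{-1}(1)}\otimes\cdots\otimes A_{\pi^{-1}(N)}$, or equivalently $A_{\pi(1)}\otimes\cdots\otimes A_{\pi(N)}=(U_\pi^{[N]})^{*}(A_1\otimes\cdots\otimes A_N)U_\pi^{[N]}$. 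This is the only genuine computation in the argument and it is routine.

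Next, combining this identity with the cyclicity of the trace and the unitarity of $U_\pi^{[N]}$, I would rewrite the symmetry condition of Definition~\ref{Symm}. Indeed $\text{tr}(A_{\pi(1)}\otimes\cdots\otimes A_{\pi(N)}\,\rho_N)=\text{tr}\big((U_\pi^{[N]})^{*}(A_1\otimes\cdots\otimes A_N)U_\pi^{[N]}\rho_N\big)=\text{tr}\big((A_1\otimes\cdots\otimes A_N)\,U_\pi^{[N]}\rho_N(U_\pi^{[N]})^{*}\big)$. Hence Definition~\ref{Symm} is equivalent to the assertion that, for every $\pi\in\Sigma_N$ and all $A_1,\ldots,A_N\in\cB(\mathbb{H})$, the operators $\rho_N$ and $U_\pi^{[N]}\rho_N(U_\pi^{[N]})^{*}$ have the same trace against the elementary tensor $A_1\otimes\cdots\otimes A_N$. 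With this reformulation the reverse implication is immediate: if $U_\pi^{[N]}\rho_N=\rho_N U_\pi^{[N]}$ then $U_\pi^{[N]}\rho_N(U_\pi^{[N]})^{*}=\rho_N$, so the two traces agree trivially.

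For the forward implication I must pass from equality of traces against all elementary tensors to the operator equality $U_\pi^{[N]}\rho_N(U_\pi^{[N]})^{*}=\rho_N$, which rearranges to the desired commutation $U_\pi^{[N]}\rho_N=\rho_N U_\pi^{[N]}$. This passage is the only delicate point, since the elementary tensors are merely a spanning set and not all of $\cB(\mathbb{H}^{\otimes N})$. The clean way around it is to specialize each $A_i$ to the rank-one operator $v\mapsto\langle\xi_i,v\rangle\,\eta_i$; then $A_1\otimes\cdots\otimes A_N$ is the rank-one operator associated with the product vectors $\xi=\xi_1\otimes\cdots\otimes\xi_N$ and $\eta=\eta_1\otimes\cdots\otimes\eta_N$, and for any bounded $T$ one has $\text{tr}\big((A_1\otimes\cdots\otimes A_N)T\big)=\langle\xi,T\eta\rangle$. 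Applying the reformulated identity with $T=\rho_N$ and with $T=U_\pi^{[N]}\rho_N(U_\pi^{[N]})^{*}$ gives $\langle\xi,\rho_N\eta\rangle=\langle\xi,U_\pi^{[N]}\rho_N(U_\pi^{[N]})^{*}\eta\rangle$ for all product vectors $\xi,\eta$. Since product vectors have dense linear span in $\mathbb{H}^{\otimes N}$ and both operators are bounded, sesquilinearity and a routine limiting argument upgrade this to equality of the two bounded operators. I expect the rank-one specialization to be the crux, since it replaces any appeal to weak-$\ast$ density of the algebraic tensor product in $\cB(\mathbb{H}^{\otimes N})$ by an elementary matrix-entry computation.
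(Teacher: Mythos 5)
Your proof is correct and follows essentially the same route as the paper: both rest on the conjugation identity $U_{\pi^{-1}}^{[N]}(A_1\otimes\cdots\otimes A_N)U_\pi^{[N]}=A_{\pi(1)}\otimes\cdots\otimes A_{\pi(N)}$, the resulting reformulation $\text{tr}\left((A_1\otimes\cdots\otimes A_N)\rho_N\right)=\text{tr}\left((A_1\otimes\cdots\otimes A_N)U_\pi^{[N]}\rho_N U_{\pi^{-1}}^{[N]}\right)$, the trivial reverse direction, and the key step of testing this against rank-one elementary tensors to upgrade equality of traces to operator equality. The only cosmetic difference is that the paper takes the rank-one operators built from an orthonormal product basis and argues by contradiction, whereas you take arbitrary product vectors and invoke density of their linear span together with sesquilinearity.
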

\begin{proof}
($\Rightarrow$) Let $\pi \in \Sigma_N$. Then 
\begin{eqnarray}
\nonumber\text{tr}(A_1\otimes \cdots \otimes A_N \rho_N) &=& \text{tr}(A_{\pi(1)}\otimes\cdots\otimes A_{\pi(N)}\rho_N)=\text{tr}(U_{\pi^{-1}}^{[N]}(A_1\otimes \cdots \otimes A_N)U_\pi^{[N]} \rho_N)\\
\label{break}&=&\text{tr}((A_1\otimes \cdots \otimes A_N)U_\pi^{[N]} \rho_N U_{\pi^{-1}}^{[N]})
\end{eqnarray} for any $A_1,...,A_N \in \cB(\mathbb{H}^{\otimes N})$. 

Let $(e_i)_{i\in I}$ be an orthonormal basis of $\mathbb{H}$. Then $(e_{i_1}\otimes e_{i_2}\otimes\cdots\otimes e_{i_N})_{i_1,i_2,...,i_N \in I}$ is an orthonormal basis of $\mathbb{H}^{\otimes N}$. Now assume that $U_{\pi}^{[N]}\rho_N U_{\pi^{-1}}^{[N]} \neq \rho_N$. This implies that for some $(j_1,...,j_k)\in I\times\cdots\times I$ we have $U_{\pi}^{[N]}\rho_N U_{\pi^{-1}}^{[N]}\left(e_{j_1}\otimes\cdots\otimes e_{j_N}\right) \neq \rho_N\left(e_{j_1}\otimes\cdots\otimes e_{j_N}\right)$, hence there exists $(k_1,...,k_N)\in I\times  \cdots \times I$ such that
\begin{eqnarray*}
\langle e_{k_1}\otimes\cdots\otimes e_{k_N},U_{\pi}^{[N]}\rho_N U_{\pi^{-1}}^{[N]}e_{j_1}\otimes\cdots\otimes e_{j_N}\rangle\neq\langle e_{k_1}\otimes\cdots\otimes e_{k_N},\rho_N e_{j_1}\otimes\cdots\otimes e_{j_N}\rangle.
\end{eqnarray*} This implies that
\begin{eqnarray*}
&&\text{tr}\left(U_{\pi}^{[N]}\rho_N U_{\pi^{-1}}^{[N]}\left.\left|e_{j_1}\right>\right.\left.\left<e_{k_1}\right|\right.\otimes\left.\left|e_{j_2}\right>\right.\left.\left<e_{k_2}\right|\right.\otimes\cdots\otimes\left.\left|e_{j_N}\right>\right.\left.\left<e_{k_N}\right|\right.\right)\\
&&=\langle e_{k_1}\otimes\cdots\otimes e_{k_N},U_{\pi}^{[N]}\rho_N U_{\pi^{-1}}^{[N]}e_{j_1}\otimes\cdots\otimes e_{j_N}\rangle\neq\langle e_{k_1}\otimes\cdots\otimes e_{k_N},\rho_N e_{j_1}\otimes\cdots\otimes e_{j_N}\rangle\\
&&=\text{tr}\left(\rho_N \left.\left|e_{j_1}\right>\right.\left.\left<e_{k_1}\right|\right.\otimes\left.\left|e_{j_2}\right>\right.\left.\left<e_{k_2}\right|\right.\otimes\cdots\otimes\left.\left|e_{j_N}\right>\right.\left.\left<e_{k_N}\right|\right.\right)
\end{eqnarray*} which contradicts Equation~(\ref{break}). Therefore $U_{\pi^{-1}}^{[N]}\rho_N U_{\pi}^{[N]}=\rho_N$.

($\Leftarrow$) For each $\pi \in \Sigma_N$,
\begin{eqnarray*}
\text{tr}(A_1 \otimes\cdots\otimes A_N \rho_N) &=& \text{tr}(A_1 \otimes\cdots\otimes A_N U_\pi^{[N]} \rho_N U_{\pi^{-1}}^{[N]})=\text{tr}(U_{\pi^{-1}}^{[N]}(A_1 \otimes\cdots\otimes A_N)U_\pi^{[N]} \rho_N)\\
&=& \text{tr}(A_{\pi(1)}\otimes\cdots\otimes A_{\pi(N)}\rho_N).
\end{eqnarray*}
\end{proof}
Some examples of symmetric density operators are as follows.

\begin{Ex} 
Let $\rho \in \mathcal{D}(\mathbb{H})$. 
For any $N \in \N$, define $\rho_N := \rho^{\otimes N}$. It is clear that $\rho_N$ is symmetric.
\end{Ex}

\begin{Ex}
Let $N \in \N$ and $B_1,...,B_N \in \mathcal{D}(\mathbb{H})$. Then 
\begin{eqnarray*}
\rho_N:=\dfrac{1}{N!}\sum\limits_{\sigma \in \Sigma_N} B_{\sigma(1)} \otimes \cdot \cdot \cdot \otimes B_{\sigma(N)} \in \mathcal{D}(\mathbb{H}^{\otimes N})
\end{eqnarray*} is symmetric.
\end{Ex}

\begin{Ex}\label{Ex:int}
Let $E$ be a separable topological space, $D:E\rightarrow\mathcal{D}(\mathbb{H})$ be continuous, $N\in\mathbb{N}$, and $\mu_N\in M(E^N)$ be symmetric. Then
\begin{eqnarray*}
D_N:=\int_{E^N}D(\omega_1)\otimes D(\omega_2) \otimes\cdots\otimes D(\omega_N)d\mu_N(\omega_1,\omega_2,...,\omega_N)
\end{eqnarray*} exists as a Bochner integral and $D_N \in \mathcal{D}(\mathbb{H}^{\otimes N})$ is symmetric.
\end{Ex}
\begin{proof}
For each $\pi\in\Sigma_N$ and $A_1,...,A_N \in \cB(\mathbb{H})$,
\begin{eqnarray*}
\text{tr}(A_1\otimes\cdots\otimes A_N D_N)&=&\text{tr}(A_1\otimes\cdots\otimes A_N \int_{E^N}D(\omega_1)\otimes D(\omega_2) \otimes\cdots\otimes D(\omega_N)d\mu_N)\\
&=&\int_{E^N}\text{tr}(A_1 D(\omega_1))\text{tr}(A_2 D(\omega_2))\cdots\text{tr}(A_N D(\omega_N))d\mu_N\\
&=&\int_{E^N}\text{tr}(A_1 D(\omega_{\pi^{-1}(1)})\text{tr}(A_2 D(\omega_{\pi^{-1}(2)}))\cdots\text{tr}(A_N D(\omega_{\pi^{-1}(N)}))d\mu_N\\
&=&\int_{E^N}\text{tr}(A_{\pi(1)} D(\omega_1))\text{tr}(A_{\pi(2)} D(\omega_2))\cdots\text{tr}(A_{\pi(N)} D(\omega_N))d\mu_N\\
&=&\text{tr}(A_{\pi(1)}\otimes\cdots\otimes A_{\pi(N)} \int_{E^N}D(\omega_1)\otimes D(\omega_2) \otimes\cdots\otimes D(\omega_N)d\mu_N)\\
&=&\text{tr}(A_{\pi(1)}\otimes\cdots\otimes A_{\pi(N)} D_N).
\end{eqnarray*}
\end{proof}

The following is the quantum version of Definition~\ref{Def:chaosClassical} that we will use in this paper.

\begin{Def}\label{Def:chaosQuantum}
Let $(\rho_N)_{N=1}^\infty$ be a sequence of symmetric density operators such that $\rho_N \in \mathcal{D}(\mathbb{H}^{\otimes N})$ for each 
$N \in \N$, $\rho \in \mathcal{D}(\mathbb{H})$ be a density operator, and $k \in \N$. 
Then $(\rho_N)_{N=1}^\infty$ is \textbf{$k-\rho$-chaotic} if and only if for all $A_1,...,A_k \in \cB(\mathbb{H})$,
\begin{eqnarray}\label{ChaosQuantumConvergence}
\text{tr}(A_1 \otimes \cdot \cdot \cdot \otimes A_k \otimes 1^{\otimes (N-k)} \rho_N) \xrightarrow[N \rightarrow \infty]{} 
\prod\limits_{j=1}^k \text{tr}(\rho A_j).
\end{eqnarray}
We say that $(\rho_N)_{N=1}^\infty$ is \textbf{$\rho$-chaotic} if and only if $(\rho_N)_{N=1}^\infty$ is $k-\rho$-chaotic for all $k \geq 1$.
\end{Def}

Next we give many equivalent formulations of this definition. We will use the fact that the partial trace of a density operator is a density operator.

\begin{Prop}\label{EquivDefChaos}
Let $(\rho_N)_{N=1}^\infty$ be a sequence of symmetric density matrices such that $\rho_N \in \mathcal{D}(\mathbb{H}^{\otimes N})$ for each 
$N \in \N$, $\rho \in \mathcal{D}(\mathbb{H})$, and $k \in \N$. The following are equivalent
\begin{enumerate}
\item $(\rho_N)_{N=1}^\infty$ is $k-\rho$-chaotic,
\item $\text{tr}\left(\left(\rho_N^{(k)}-\rho^{\otimes k}\right)A_1 \otimes \cdots \otimes A_k\right)\xrightarrow[N\rightarrow\infty]{}0$ for all $A_1,...,A_k \in \cB(\mathbb{H})$,
\item $\langle s, \left(\rho_N^{(k)}-\rho^{\otimes k}\right)t\rangle \xrightarrow[N \rightarrow \infty]{}0$ for all $s,t\in\mathbb{H}^{\otimes k}$, and
\label{WOTequiv}\item $\text{tr}|\rho_N^{(k)}-\rho^{\otimes k}|\xrightarrow[N \rightarrow \infty]{}0$.\label{chaosNorm}
\end{enumerate}
\end{Prop}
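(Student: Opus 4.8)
The plan is to prove the four statements equivalent by establishing the cycle $(1)\Leftrightarrow(2)$, $(2)\Rightarrow(3)$, $(4)\Rightarrow(2)$, and $(3)\Rightarrow(4)$. The equivalence $(1)\Leftrightarrow(2)$ is essentially a restatement: by the defining property of the partial trace, $\text{tr}(A_1\otimes\cdots\otimes A_k\otimes 1^{\otimes(N-k)}\rho_N)=\text{tr}(A_1\otimes\cdots\otimes A_k\,\rho_N^{(k)})$, while $\prod_{j=1}^k\text{tr}(\rho A_j)=\text{tr}(A_1\otimes\cdots\otimes A_k\,\rho^{\otimes k})$; subtracting and using the linearity and cyclicity of the trace turns $(1)$ into $(2)$ verbatim, and the identification is reversible.

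For $(2)\Rightarrow(3)$ I would specialize $(2)$ to rank-one operators $A_j=|t_j\rangle\langle s_j|$, so that $A_1\otimes\cdots\otimes A_k=|t\rangle\langle s|$ with $t=t_1\otimes\cdots\otimes t_k$ and $s=s_1\otimes\cdots\otimes s_k$, and invoke $\text{tr}(X|t\rangle\langle s|)=\langle s,Xt\rangle$ to obtain $(3)$ for elementary tensors. Since finite linear combinations of elementary tensors are dense in $\mathbb{H}^{\otimes k}$ and the operators $\rho_N^{(k)}-\rho^{\otimes k}$ are uniformly bounded in operator norm (by $2$, being differences of density operators), a routine $\varepsilon$-approximation in both slots upgrades this to all $s,t\in\mathbb{H}^{\otimes k}$, giving $(3)$. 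The implication $(4)\Rightarrow(2)$ is immediate from the trace--operator-norm duality $|\text{tr}(XB)|\le\|B\|_\infty\,\text{tr}|X|$ applied with $X=\rho_N^{(k)}-\rho^{\otimes k}$ and $B=A_1\otimes\cdots\otimes A_k$.

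The heart of the argument, and the step I expect to be the main obstacle, is $(3)\Rightarrow(4)$: passing from weak-operator convergence to trace-norm convergence. This is false for general operators, and it will rely decisively on the fact that $\rho_N^{(k)}$ and $\rho^{\otimes k}$ are density operators, in particular that both have trace exactly $1$ (the partial trace of a density operator is a density operator). It is the non-commutative analogue of Scheffé's lemma. To prove it, fix $\varepsilon>0$ and diagonalize $\rho^{\otimes k}=\sum_j\lambda_j|f_j\rangle\langle f_j|$; choose the finite-rank spectral projection $P=\sum_{j\le m}|f_j\rangle\langle f_j|$ onto the largest eigenvalues so that $\text{tr}((1-P)\rho^{\otimes k})<\varepsilon^2$. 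Writing $X_N=\rho_N^{(k)}-\rho^{\otimes k}$ and $Q=1-P$, I would use the decomposition
\[
X_N=\big(PX_NP\big)+\big(\rho_N^{(k)}-P\rho_N^{(k)}P\big)-\big(\rho^{\otimes k}-P\rho^{\otimes k}P\big)
\]
and bound the three pieces in trace norm. The first lives on the finite-dimensional range of $P$, where all norms are equivalent and $(3)$ forces every matrix entry to converge, so $\|PX_NP\|_1\to 0$. For the other two I would expand $\sigma-P\sigma P=P\sigma Q+Q\sigma P+Q\sigma Q$ and estimate each corner via $\|P\sigma Q\|_1\le\|\sigma^{1/2}\|_2\,\|\sigma^{1/2}Q\|_2=\sqrt{\text{tr}\,\sigma}\,\sqrt{\text{tr}(Q\sigma Q)}$ together with $\|Q\sigma Q\|_1=\text{tr}(Q\sigma)$, for $\sigma\in\{\rho^{\otimes k},\rho_N^{(k)}\}$.

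The piece with $\sigma=\rho^{\otimes k}$ is $<C\varepsilon$ by the choice of $P$. The piece with $\sigma=\rho_N^{(k)}$ is controlled once one knows $\text{tr}(Q\rho_N^{(k)})\to\text{tr}(Q\rho^{\otimes k})<\varepsilon^2$; this is exactly where the trace constraint enters, since $\text{tr}(Q\rho_N^{(k)})=1-\text{tr}(P\rho_N^{(k)})$ and $\text{tr}(P\rho_N^{(k)})\to\text{tr}(P\rho^{\otimes k})$ by the finite-rankness of $P$ and $(3)$, using $\text{tr}(\rho_N^{(k)})=1=\text{tr}(\rho^{\otimes k})$. Letting $N\to\infty$ and then $\varepsilon\to 0$ gives $\text{tr}|X_N|\to 0$, which is $(4)$, and closes the cycle. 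The only delicate point is organizing the corner estimates so that the $N$-dependent bounds are genuinely uniform once $P$ is fixed; everything else is bookkeeping with the Cauchy--Schwarz inequality $\|AB\|_1\le\|A\|_2\|B\|_2$ for the trace ideals.
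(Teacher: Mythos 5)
Your proposal is correct, and the peripheral implications ($(1)\Leftrightarrow(2)$ via the defining property of the partial trace, $(2)\Rightarrow(3)$ via rank-one elementary tensors plus a density/uniform-boundedness argument, $(4)\Rightarrow(2)$ via trace--operator-norm duality) match the paper's proof in substance. The genuine difference is the key step $(3)\Rightarrow(4)$: the paper does not prove it at all, but instead cites Wehrl's theorem (Rep.\ Math.\ Phys.\ 10 (1976), Theorem 3), which states that weak-operator convergence of density operators to a density operator is equivalent to trace-norm convergence; the paper even needs an additional remark that Wehrl's standing separability assumption can be dropped. You instead give a self-contained proof of this fact --- essentially reproving Wehrl's theorem as a noncommutative Scheff\'e lemma: diagonalize $\rho^{\otimes k}$, cut by a finite-rank spectral projection $P$ with $\mathrm{tr}\bigl((1-P)\rho^{\otimes k}\bigr)<\varepsilon^2$, split $X_N=PX_NP+\bigl(\rho_N^{(k)}-P\rho_N^{(k)}P\bigr)-\bigl(\rho^{\otimes k}-P\rho^{\otimes k}P\bigr)$, handle the finite-rank block by equivalence of norms, and control the corner terms by H\"older ($\|AB\|_1\le\|A\|_2\|B\|_2$) together with the tightness estimate $\mathrm{tr}\bigl(Q\rho_N^{(k)}\bigr)=1-\mathrm{tr}\bigl(P\rho_N^{(k)}\bigr)\to\mathrm{tr}\bigl(Q\rho^{\otimes k}\bigr)$, which is exactly where the trace-one normalization enters. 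Your decomposition and estimates are all valid (in particular the algebra of the three-term splitting checks out, and the spectral decomposition of the compact positive operator $\rho^{\otimes k}$ requires no separability), and your cycle $(1)\Leftrightarrow(2)$, $(2)\Rightarrow(3)\Rightarrow(4)\Rightarrow(2)$ does yield all four equivalences. What your route buys is a proof that is independent of the literature, works verbatim in nonseparable Hilbert spaces, and isolates the mechanism (loss-of-mass control) that makes weak convergence upgrade to trace-norm convergence for states; what the paper's route buys is brevity, at the cost of outsourcing the hardest step and patching the separability hypothesis by inspection of Wehrl's proof.
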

\begin{proof}
($(1) \Leftrightarrow (2)$) This is obvious since $\text{tr}(A_1\otimes  \cdots \otimes A_k \rho^{\otimes k})=\prod\limits_{j=1}^k\text{tr}(\rho A_j)$, and by Attal~\cite[Theorem 2.28]{Attal}, $\text{tr}(A_1\otimes\cdots\otimes A_k \otimes 1^{\otimes (N-k)}\rho_N)=\text{tr}(A_1\otimes\cdots\otimes A_k \rho_N^{(k)})$.\\
($(2) \Rightarrow (3)$) Let $\epsilon>0$, and $t,s\in\mathbb{H}^{\otimes k}$. Without loss of generality, we can make $\left|\left|t\right|\right|=\left|\left|s\right|\right|=1$. Fix an orthonormal basis $(e_i)_{i\in I}$ of $\mathbb{H}$. Then $(e_{i_1}\otimes e_{i_2}\otimes\cdots\otimes e_{i_k})_{i_1,i_2,...,i_k\in I}$ is an orthonormal basis of $\mathbb{H}^{\otimes k}$, and we can write
\begin{eqnarray*}
t=\sum\limits_{i_1,i_2,...,i_k\in I} t_{i_1 i_2 ... i_k} e_{i_1}\otimes e_{i_2} \otimes\cdots\otimes e_{i_k}\,\,\,\text{ and }\,\,\, s=\sum\limits_{j_1,j_2,...,j_k\in I} s_{j_1 j_2 ... j_k}e_{j_1}\otimes e_{j_2}\otimes\cdots\otimes e_{j_k}
\end{eqnarray*} with
\begin{eqnarray*}
\sum\limits_{i_1,i_2,...,i_k\in I} \left|t_{i_1 i_2 ... i_k}\right|^2 = 1\,\,\,\text{ and }\,\,\,\sum\limits_{j_1,j_2,...,j_k\in I} \left|s_{j_1 j_2 ... j_k}\right|^2 = 1.
\end{eqnarray*}
For every finite index set $J\subset I$, let
\begin{eqnarray*}
t_{I\setminus J}&=&\sum\limits_{i_1,i_2...,i_k\in I\setminus J} t_{i_1 i_2 ... i_k} e_{i_1}\otimes e_{i_2} \otimes\cdots\otimes e_{i_k},\\
t_{J}&=&\sum\limits_{i_1,i_2,...,i_k\in J} t_{i_1 i_2 ... i_k} e_{i_1}\otimes e_{i_2} \otimes\cdots\otimes e_{i_k}
\end{eqnarray*} and similarly define $s_{I\setminus J}$ and $s_{J}$.

Choose a finite index set $J\subset I$ such that $\left|\left|t_{I\setminus J}\right|\right|<\epsilon$ and $\left|\left|s_{I\setminus J}\right|\right|<\epsilon$. Then we have
\begin{eqnarray}
\nonumber&&\langle t,\left(\rho_N^{(k)}-\rho^{\otimes k}\right)s\rangle\\
\label{Easy1}&&=\langle t_{J},\left(\rho_N^{(k)}-\rho^{\otimes k}\right)s_{J}\rangle\\
\label{Easy2}&&+ \langle t_{J},\left(\rho_N^{(k)}-\rho^{\otimes k}\right)s_{I\setminus J}\rangle+\langle t_{I\setminus J},\left(\rho_N^{(k)}-\rho^{\otimes k}\right)s_{J}\rangle+\langle t_{I\setminus J},\left(\rho_N^{(k)}-\rho^{\otimes k}\right)s_{I\setminus J}\rangle.
\end{eqnarray} We have that $\left|\left|\rho_N^{k}-\rho^{\otimes k}\right|\right|_{\infty}\leq\left|\left|\rho_N^{k}-\rho^{\otimes k}\right|\right|_1 \leq 2$. Thus, using Cauchy-Schwarz inequalities, line~(\ref{Easy2}) is bounded independent of $N$ by
\begin{eqnarray*}
&&\left|\left|t_{J}\right|\right|\left|\left|\left(\rho_N^{(k)}-\rho^{\otimes k}\right)\right|\right|_{\infty}\left|\left|s_{I\setminus J}\right|\right|+\left|\left| t_{I\setminus J}\right|\right|\left|\left|\left(\rho_N^{(k)}-\rho^{\otimes k}\right)\right|\right|_{\infty}\left|\left|s_{J}\right|\right|\\
&&+\left|\left| t_{I\setminus J}\right|\right|\left|\left|\left(\rho_N^{(k)}-\rho^{\otimes k}\right)\right|\right|_{\infty} \left|\left| s_{I \setminus J}\right|\right| < 2\epsilon + 2\epsilon + 2\epsilon^2 = 4\epsilon + 2\epsilon^2.
\end{eqnarray*}
Line~(\ref{Easy1}) is equal to
\begin{eqnarray}
\nonumber&&\sum\limits_{\substack{i_1,i_2,...,i_k,\\j_1,j_2,...,j_k\in J}} \overline{t_{i_1 i_2 ... i_k}}s_{j_1 j_2 ... j_k} \langle e_{i_1}\otimes e_{i_2}\otimes\cdots\otimes e_{i_k},\left(\rho_N^{(k)}-\rho^{\otimes k}\right)e_{j_1}\otimes e_{j_2}\otimes\cdots\otimes e_{j_k}\rangle\\
\nonumber&&=\sum\limits_{\substack{i_1,i_2,...,i_k,\\j_1,j_2,...,j_k\in J}} \overline{t_{i_1 i_2 ... i_k}}s_{j_1 j_2 ... j_k}\text{tr}\left(\left(\rho_N^{(k)}-\rho^{\otimes k}\right)\left.\left|e_{j_1}\otimes e_{j_2}\otimes\cdots\otimes e_{j_k}\right>\right.\left.\left<e_{i_1}\otimes e_{i_2}\otimes\cdots\otimes e_{i_k}\right|\right.\right)\\
\label{Easy3}&&=\sum\limits_{\substack{i_1,i_2,...,i_k,\\j_1,j_2,...,j_k\in J}} \overline{t_{i_1 i_2 ... i_k}}s_{j_1 j_2 ... j_k}\text{tr}\left(\left(\rho_N^{(k)}-\rho^{\otimes k}\right)\left.\left|e_{j_1}\right>\right.\left.\left<e_{i_1}\right|\right.\otimes\left.\left|e_{j_2}\right>\right.\left.\left<e_{i_2}\right|\right.\otimes\cdots\otimes\left.\left|e_{j_k}\right>\right.\left.\left<e_{i_k}\right|\right.\right).
\end{eqnarray} By assumption, there exists an $N_1\in\mathbb{N}$ such that for all $N\geq N_1$, line~(\ref{Easy3}) is bounded by $\epsilon$. Therefore, for every $N\geq N_1$,
\begin{eqnarray*}
\left|\langle t,\left(\rho_N^{(k)}-\rho^{\otimes k}\right)s\rangle\right| < \epsilon + 4\epsilon + 2\epsilon^2 = 5\epsilon + 2\epsilon^2.
\end{eqnarray*}
($(3) \Leftrightarrow (4)$) Wehrl~\cite[Theorem 3]{Wehrl} proved that a sequence, $(D_N)_{N=1}^\infty \subset \mathcal{D}(\mathbb{K})$, of density operators on a Hilbert space $\mathbb{K}$ converges in the weak operator topology to a density operator $D \in \mathcal{D}(\mathbb{K})$ if and only if it converges in norm, i.e. $\text{tr}|D_N-D|\xrightarrow[N\rightarrow\infty]{}0$. Our result follows by letting $\mathbb{K}=\mathbb{H}^{\otimes k}$, $D_N=\rho_N^{(k)}$ for each $N$, and $D=\rho^{\otimes k}$. (Wehrl~\cite{Wehrl} assumed in his paper that all Hilbert spaces considered are separable. However, the examination of his proof shows that the assumption of separability of the Hilbert space is not needed in Theorem 3.)\\
($(4)\Rightarrow (2)$) This is obvious.
\end{proof}
\begin{Rmk} Notice that condition~(\ref{WOTequiv}) of Proposition~\ref{EquivDefChaos} is equivalent to
\begin{eqnarray}\label{finiterank}
\text{tr}\left(\left(\rho_N^{(k)}-\rho^{\otimes k}\right)A\right)\xrightarrow[N\rightarrow\infty]{}0 \text{ for all finite rank operators } A\in\cB(\mathbb{H}^{\otimes k}).
\end{eqnarray} Furthermore since (\ref{finiterank}) is weaker than
\begin{eqnarray}\label{nofiniterank}
\text{tr}\left(\left(\rho_N^{(k)}-\rho^{\otimes k}\right)A\right)\xrightarrow[N\rightarrow\infty]{}0 \text{ for all } A\in\cB(\mathbb{H}^{\otimes k})
\end{eqnarray} which is further weaker than condition~(\ref{chaosNorm}) of Proposition~\ref{EquivDefChaos}, we have that conditions (\ref{finiterank}) and (\ref{nofiniterank}) are equivalent to the conditions presented in Proposition~\ref{EquivDefChaos}.
\end{Rmk}

Condition (\ref{chaosNorm}) in Proposition~\ref{EquivDefChaos} appears in \cite[Theorem 5.7]{Spohn}. Gottlieb \cite{Gottlieb} used this condition for all $k$ to define that ``$\rho_N$ is $\rho$-chaotic''. Proposition~\ref{EquivDefChaos} shows that Gottlieb's definition agrees with ours. We will now give some examples of chaotic sequences.

\begin{Ex}
Let $\rho \in \cB(\mathbb{H})$. For each $N \in \N$, define $\rho_N:=\rho^{\otimes N}$. Then it is clear that 
$(\rho_N)_{N=1}^\infty$ is $\rho$-chaotic.
\end{Ex}

The following example due to Gottlieb~\cite[Lemma 1.3.2]{Gottlieb} gives a way of constructing a chaotic sequence of density operators from any classically chaotic sequence of probability measures.

\begin{Ex}
Let $E$ be a separable topological space, and for each $N\in\mathbb{N}$ let $\mu_N\in M(E^N)$ and $\mu\in M(E)$ such that $(\mu_N)_{N=1}^\infty$ is $\mu$-chaotic. Let $D:E\rightarrow\mathcal{D}(\mathbb{H})$ be a continuous function. Then for $N\in\mathbb{N}$, $D_N:=\int_{E^N} D(\omega_1) \otimes D(\omega_2) \otimes \cdot \cdot \cdot \otimes D(\omega_N) d\mu_N(\omega_1,\omega_2,...,\omega_N)$ 
and $\overline{D}:=\int_E D(\omega) d\mu$ exist as Bochner integrals, $D_N \in\mathcal{D}(\mathbb{H}^{\otimes N})$ is symmetric, $\overline{D}\in\mathcal{D}(\mathbb{H})$, and $(D_N)_{N=1}^\infty$ is $\overline{D}$-chaotic.
\end{Ex}
\begin{proof}
For any 
$k \geq 1$ and $A_1,...,A_k \in \cB(\mathbb{H})$, 
\begin{eqnarray*}
&&\text{tr}(A_1 \otimes \cdot \cdot \cdot \otimes A_k \otimes 1^{\otimes (N-k)} 
\int D(\omega_1) \otimes \cdot \cdot \cdot \otimes D(\omega_N) d\mu_N)\\
&=& \text{tr}\left(\int A_1 D(\omega_1) \otimes \cdot \cdot \cdot \otimes A_k D(\omega_k) \otimes D(\omega_{k+1}) \otimes \cdot \cdot \cdot \otimes D(\omega_N) d\mu_N\right)\\
&=& \int \text{tr}(A_1 D(\omega_1)) \cdot \cdot \cdot \text{tr}(A_k D(\omega_k)) d\mu_N
\end{eqnarray*} which converges to
$$\int \text{tr}(A_1 D(\omega_1)) \cdot \cdot \cdot \text{tr}(A_k D(\omega_k)) d\mu^{\otimes k} = \prod_{j=1}^k \text{tr}\left(A_j \int D(\omega) d\mu\right)$$ as $N$ approaches infinity by Definition~\ref{Def:chaosClassical}.
\end{proof}

Now we are ready to prove the analogous statement to Proposition~\ref{EquivClassical} (\cite[Proposition $2.2$]{Sznitman}) for chaotic sequences of density operators.

\begin{Thm}\label{MAIN}
Let $(\rho_N)_{N=1}^\infty$ be a sequence of symmetric density operators such that $\rho_N \in \mathcal{D}(\mathbb{H}^{\otimes N})$ for each $N \in \N$, 
and let $\rho \in \mathcal{D}(\mathbb{H})$. Then the following are equivalent.
\begin{enumerate}
\item[(1)] $(\rho_N)_{N=1}^\infty$ is $k-\rho$-chaotic for all $k \in \mathbb{N}$,
\item[(2)] $(\rho_N)_{N=1}^\infty$ is $2-\rho$-chaotic, and
\item[(3)] for each $A \in \cB(\mathbb{H})$,
\begin{eqnarray}\label{EmpiricalMeasureConvergence}
\text{tr}\left(\left|\dfrac{1}{N} \sum\limits_{j=1}^N 1^{\otimes (j-1)} \otimes A \otimes 1^{\otimes (N-j)} - \text{tr}(A\rho) 1^{\otimes N}\right|^2\rho_N\right) \xrightarrow[N \rightarrow \infty]{} 0.
\end{eqnarray}
\end{enumerate}
\end{Thm}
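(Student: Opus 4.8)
The plan is to prove the three conditions equivalent by establishing the cycle $(1)\Rightarrow(2)\Rightarrow(3)\Rightarrow(1)$. Throughout I would abbreviate $c_A:=\text{tr}(A\rho)$, write $S_N(A):=\frac1N\sum_{j=1}^N 1^{\otimes(j-1)}\otimes A\otimes 1^{\otimes(N-j)}$ for the empirical-average operator that appears in (3), and introduce on $\cB(\mathbb{H}^{\otimes N})$ the positive semidefinite pairing $\langle X,Y\rangle_N:=\text{tr}(X^*Y\rho_N)$ with seminorm $\|X\|_N:=\langle X,X\rangle_N^{1/2}$; positivity of $\rho_N$ makes this a genuine Cauchy--Schwarz pairing, and condition (3) is exactly the assertion that $\|S_N(A)-c_A 1^{\otimes N}\|_N\to 0$ for every $A\in\cB(\mathbb{H})$. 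The implication $(1)\Rightarrow(2)$ is immediate from Definition~\ref{Def:chaosQuantum}.

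For $(2)\Rightarrow(3)$ I would expand $\|S_N(A)-c_A 1\|_N^2=\text{tr}(S_N(A^*)S_N(A)\rho_N)-c_A\,\text{tr}(S_N(A^*)\rho_N)-\overline{c_A}\,\text{tr}(S_N(A)\rho_N)+|c_A|^2$, using $S_N(A)^*=S_N(A^*)$. The crucial computation is to split the double sum in $S_N(A^*)S_N(A)$ into its diagonal ($i=j$) and off-diagonal ($i\neq j$) parts; the symmetry of $\rho_N$ (Definition~\ref{Symm}) makes the diagonal contribute $\frac1N\text{tr}(A^*A\,\rho_N^{(1)})$ and the off-diagonal contribute $\frac{N-1}{N}\text{tr}((A^*\otimes A)\rho_N^{(2)})$, while $\text{tr}(S_N(A)\rho_N)=\text{tr}(A\rho_N^{(1)})$. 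Letting $N\to\infty$ and invoking $2$-$\rho$-chaos forces $\text{tr}((A^*\otimes A)\rho_N^{(2)})\to|c_A|^2$; setting one factor equal to $1$ in the $2$-chaos condition also yields $1$-chaos, so $\text{tr}(A\rho_N^{(1)})\to c_A$ and $\text{tr}(A^*\rho_N^{(1)})\to\overline{c_A}$. The $\frac1N$-term vanishes since its numerator is bounded by $\|A\|_\infty^2$, and the four surviving limits cancel to $0$.

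For $(3)\Rightarrow(1)$, which is the heart of the matter, I would establish $k$-$\rho$-chaos for arbitrary $k$ via two moves. The combinatorial move: expand $S_N(A_1)\cdots S_N(A_k)=N^{-k}\sum_{i_1,\dots,i_k}$ of operators placing $A_m$ in slot $i_m$. Each of the $N!/(N-k)!$ terms with distinct indices evaluates, after using symmetry of $\rho_N$ to relocate the operators to the first $k$ slots, to $\text{tr}((A_1\otimes\cdots\otimes A_k)\rho_N^{(k)})$, whereas the remaining $O(N^{k-1})$ colliding terms each satisfy $|\text{tr}(\cdot\,\rho_N)|\leq\prod_m\|A_m\|_\infty$; hence $\text{tr}(S_N(A_1)\cdots S_N(A_k)\rho_N)=\text{tr}((A_1\otimes\cdots\otimes A_k)\rho_N^{(k)})+O(1/N)$. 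The replacement move: from the telescoping identity $\prod_m S_N(A_m)-\prod_m c_{A_m}=\sum_m\big(\prod_{j<m}c_{A_j}\big)(S_N(A_m)-c_{A_m})\prod_{j>m}S_N(A_j)$, pairing against $\rho_N$ and applying Cauchy--Schwarz in $\langle\cdot,\cdot\rangle_N$ bounds the $m$-th term by $\|S_N(A_m^*)-c_{A_m^*}1\|_N\cdot\big\|\prod_{j>m}S_N(A_j)\big\|_N$. The first factor tends to $0$ by (3) applied to $A_m^*$, and since $\|S_N(A)\|_\infty\leq\|A\|_\infty$ the second factor is bounded by $\prod_{j>m}\|A_j\|_\infty$. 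Thus $\text{tr}(S_N(A_1)\cdots S_N(A_k)\rho_N)\to\prod_m c_{A_m}$, and combining with the combinatorial move gives $\text{tr}((A_1\otimes\cdots\otimes A_k)\rho_N^{(k)})\to\prod_m\text{tr}(A_m\rho)$, i.e. $k$-$\rho$-chaos.

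I expect the main obstacle to be the $(3)\Rightarrow(1)$ step, and within it the interplay of the two moves: one must confirm that replacing each empirical average by its scalar limit is legitimate, which rests on (a) choosing the single correct semi-inner product $\langle X,Y\rangle_N=\text{tr}(X^*Y\rho_N)$ so that (3) becomes precisely a seminorm statement, (b) the uniform bound $\|S_N(A)\|_\infty\leq\|A\|_\infty$ that controls the tail products inside the Cauchy--Schwarz estimate, and (c) the careful bookkeeping of distinct-versus-colliding multi-indices showing the off-diagonal terms reproduce $\rho_N^{(k)}$ while the collisions are genuinely $O(1/N)$. Once the operator $S_N$ and the pairing $\langle\cdot,\cdot\rangle_N$ are in place, the remaining estimates are routine.
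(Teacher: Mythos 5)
Your proposal is correct and follows essentially the same route as the paper's proof: the same cycle $(1)\Rightarrow(2)\Rightarrow(3)\Rightarrow(1)$, the same diagonal/off-diagonal splitting of the double sum for $(2)\Rightarrow(3)$, and for $(3)\Rightarrow(1)$ the same two-step argument comparing $\text{tr}\bigl(\prod_m X_N(A_m)\rho_N\bigr)$ to $\text{tr}\bigl((A_1\otimes\cdots\otimes A_k)\rho_N^{(k)}\bigr)$ via the distinct-versus-colliding index count and then telescoping with Cauchy--Schwarz for the positive functional $\text{tr}(\cdot\,\rho_N)$. The only differences are notational (partial traces $\rho_N^{(k)}$ and the seminorm $\|\cdot\|_N$ in place of the paper's explicit trace expressions and its citation of Bratteli--Robinson for the Cauchy--Schwarz inequality).
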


\begin{Def}
For $N\in\mathbb{N}$, define $X_N:\cB(\mathbb{H})\rightarrow\cB(\mathbb{H}^{\otimes N})$ by
\begin{eqnarray*}
X_N(A)=\dfrac{1}{N}\sum\limits_{j=1}^N 1^{\otimes (j-1)}\otimes A \otimes 1^{\otimes (N-j)}.
\end{eqnarray*}
\end{Def}
The function $X_N$ is studied in \cite{GolsePaul} and is called a \textbf{quantum empirical measure}. The above theorem and \cite[Proposition $2.2$]{Sznitman} gives more justifications for the choice of this term.

\begin{proof}
($(1) \Rightarrow (2)$) This is obvious.

($(2) \Rightarrow (3)$) Let $A \in \cB(\mathbb{H})$. Notice that
\begin{eqnarray}
\label{2imp3}&\text{tr}\left(\left|\dfrac{1}{N} \sum\limits_{j=1}^N 1^{\otimes (j-1)} \otimes A \otimes 1^{\otimes (N-j)} - \text{tr}(A\rho) 1^{\otimes N}\right|^2\rho_N\right)
\end{eqnarray}
$$= \text{tr}((\dfrac{1}{N} \sum\limits_{i=1}^N 1^{\otimes (i-1)} \otimes A^* \otimes 1^{\otimes (N-i)} - \overline{\text{tr}(A\rho)}1^{\otimes N})(\dfrac{1}{N} \sum\limits_{j=1}^N 1^{\otimes (j-1)} \otimes A \otimes 1^{\otimes (N-j)} - \text{tr}(A\rho)1^{\otimes N})\rho_N).$$
By distributing, we obtain that the last expression is equal to
\begin{align}
&\label{first}\dfrac{1}{N^2} \text{tr}(\sum\limits_{i,j=1}^N (1^{\otimes (i-1)}\otimes A^* \otimes 1^{\otimes (N-i)})(1^{\otimes (j-1)}\otimes A \otimes 1^{\otimes (N-j)}) \rho_N)\\
\label{second}&- \dfrac{\text{tr}(A\rho)}{N} \text{tr}(\sum\limits_{j=1}^N 1^{\otimes (j-1)} \otimes A^* \otimes 1^{\otimes (N-j)} \rho_N)\\\label{third}
&- \dfrac{\overline{\text{tr}(A\rho)}}{N} \text{tr}(\sum\limits_{j=1}^N 1^{\otimes (j-1)} \otimes A \otimes 1^{\otimes (N-j)} \rho_N)\\\label{fourth}
&+|\text{tr}(A\rho)|^2.
\end{align}
We will obtain that the sum of lines (\ref{first}), (\ref{second}), (\ref{third}), and (\ref{fourth}) goes to zero as $N$ approaches infinity. To evaluate (\ref{first}), we consider three cases: when $i=j$, when $i < j$, and when $j < i$. If $i=j$, then by symmetry of $\rho_N$,
\begin{eqnarray*}
\dfrac{1}{N^2} \sum\limits_{j=1}^N \text{tr}( 1^{\otimes (j-1)} \otimes |A|^2 \otimes 1^{\otimes (N-j)} \rho_N) &=& \dfrac{1}{N} \text{tr}(|A|^2 \otimes 1^{\otimes (N-1)} \rho_N)\\
&\leq& \dfrac{1}{N} |||A|^2 \otimes 1^{\otimes (N-1)}||_\infty ||\rho_N||_1 \leq \dfrac{|||A|^2||_\infty}{N},
\end{eqnarray*} which goes to zero as $N$ approaches infinity. If $i < j$, then by symmetry of $\rho_N$,
\begin{eqnarray*}
&&\dfrac{1}{N^2} \sum\limits_{i < j} \text{tr}( 1^{\otimes (i-1)} \otimes A^* \otimes 1^{\otimes (j-i-1)} \otimes A \otimes 1^{\otimes(N-j)} \rho_N)\\
&=& \dfrac{1}{N^2} \dfrac{N!}{2(N-2)!} \text{tr}(A^* \otimes A \otimes 1^{\otimes (N-2)} \rho_N)\\ &=& \dfrac{N-1}{2N} \text{tr}(A^* \otimes A \otimes 1^{\otimes (N-2)} \rho_N)
\xrightarrow[N \rightarrow \infty]{2-\rho-\text{chaotic}} \dfrac{1}{2}\text{tr}(A \rho)\text{tr}(A^* \rho) = \dfrac{1}{2}|\text{tr}(A\rho)|^2.
\end{eqnarray*}
If $j < i$ we obtain exactly the same limit. Thus, we have that line (\ref{first}) converges to $|\text{tr}(A\rho)|^2$ as $N$ approaches infinity.

Using symmetry of $\rho_N$ and by assumption, we obtain the limit of line (\ref{second}),
\begin{eqnarray*}
\dfrac{-\text{tr}(A\rho)}{N} &\text{tr}&(\sum\limits_{j=1}^N 1^{\otimes (j-1)} \otimes A^* \otimes 1^{\otimes (N-j)} \rho_N)\\
 &=& -\text{tr}(A\rho) \text{tr}(A^* \otimes 1^{\otimes (N-1)}\rho_N) \xrightarrow[N\rightarrow\infty]{} - \text{tr}(A\rho)\text{tr}(A^* \rho) = - |\text{tr}(A\rho)|^2,
\end{eqnarray*} where in the last limit, we used the obvious fact that if $\rho_N$ is $2-\rho$-chaotic then it is $1-\rho$-chaotic.
Similarly, line (\ref{third}) converges to $-|\text{tr}(A\rho)|^2$ as $N$ approaches infinity.

Therefore, the sum of lines (\ref{first}), (\ref{second}), (\ref{third}), and (\ref{fourth}) converge to 
\begin{eqnarray*}
|\text{tr}(A\rho)|^2 - |\text{tr}(A\rho)|^2 - |\text{tr}(A\rho)|^2 + |\text{tr}(A\rho)|^2 = 0,
\end{eqnarray*} and line (\ref{2imp3}) converges to $0$ as $N$ approaches infinity.

($(3) \Rightarrow (1)$) Let $k \in \mathbb{N}$ and $A_1,...,A_k \in \cB(\mathbb{H})$. Then
\begin{eqnarray}
\label{3imp1}&&\left|\text{tr}\left(A_1 \otimes \cdots \otimes A_k \otimes 1^{\otimes (N-k)}\rho_N\right) - \prod\limits_{j=1}^k \text{tr}(\rho A_j)\right|\leq\\\label{fifth}
&&\left|\text{tr}\left(A_1 \otimes \cdots \otimes A_k \otimes 1^{\otimes (N-k)} \rho_N\right)\right.\\\nonumber
&-& \left.\text{tr}\left(\prod\limits_{j=1}^k \dfrac{1}{N} \left(A_j \otimes 1^{\otimes (N-1)} + 1 \otimes A_j \otimes 1^{\otimes (N-2)} + \cdots + 1^{\otimes (N-1)} \otimes A_j\right)\rho_N\right)\right|\\\label{sixth}
&+& \left| \text{tr}\left(\prod\limits_{j=1}^k \dfrac{1}{N} \left(A_j \otimes 1^{\otimes (N-1)} + 1 \otimes A_j \otimes 1^{\otimes (N-2)} + \cdots + 1^{\otimes (N-1)} \otimes A_j\right)\rho_N\right)\right.\\\nonumber
&-& \left.\prod\limits_{j=1}^N \text{tr}(\rho A_j)\right|.
\end{eqnarray} We label the first and second lines after the inequality by (\ref{fifth}) and the third and fourth lines after the inequality by (\ref{sixth}). Our goal will be to show that the sum of lines (\ref{fifth}) and (\ref{sixth}) goes to $0$ as $N$ approaches infinity.

For lines (\ref{fifth}), for $k \leq N$ we define $E_{k,N}$ to be the set of embedings (i.e. one-to-one maps)\\ $\sigma:\{1,...,k\} \rightarrow \{1,...,N\}$. Notice that $\# E_{k,N} = \dfrac{N!}{(N-k)!}$. Furthermore, for $\sigma \in E_{k,N}$ and $i \in \{1,...,N\}$, define
\[A_{\sigma,i}:= \begin{cases} 
      A_j & \text{when } \sigma(j)=i \\
      1 & \text{otherwise}. 
   \end{cases}
\]
Then, by symmetry of $\rho_N$, we can rewrite lines (\ref{fifth}) as
\begin{eqnarray}
\label{seventh}&&\left|\text{tr}\left(\left(\dfrac{(N-k)!}{N!} \sum\limits_{\sigma \in E_{k,N}}\left[A_{\sigma,1} \otimes A_{\sigma,2} \otimes 
\cdot \cdot \cdot \otimes A_{\sigma,N}\right]\right.\right.\right.\\
\label{eighth}&&\left.\left.\left.-\dfrac{1}{N^k} 
\prod\limits_{j=1}^k \left(A_j \otimes 1^{\otimes (N-1)} + 1 \otimes A_j \otimes 1^{\otimes (N-2)} + 
\cdot \cdot \cdot + 1^{\otimes (N-1)}\otimes A_j\right)\right)\rho_N\right)\right|.
\end{eqnarray}
In line (\ref{eighth}), there are two types of terms: the terms with $N-k$ $1$'s in the expanded form which we call the off-diagonal terms, and all the other terms which we call the diagonal terms. There are $\dfrac{N!}{(N-k)!}$ off-diagonal terms and $N^k-\dfrac{N!}{(N-k)!}$ diagonal terms. Let $M:=\max\limits_{j=1,...,k} ||A_j||_\infty$. The off-diagonal terms are exactly the terms of line (\ref{seventh}). Thus, the addition of line (\ref{seventh}) and the off-diagonal terms of line (\ref{eighth}) is bounded by
\begin{eqnarray*}
\left|\left|\left(\dfrac{(N-k)!}{N!} - \dfrac{1}{N^k} \right)\sum\limits_{\sigma \in E_{k,N}} A_{\sigma,1} \otimes \cdot \cdot \cdot \otimes A_{\sigma,N}\right|\right|_\infty \left|\left|\rho_N\right|\right|_1 \leq \dfrac{N!}{(N-k)!}\left(\dfrac{(N-k)!}{N!} - \dfrac{1}{N^k}\right)M^k.
\end{eqnarray*} Each diagonal term is also bounded by $M^k$. Thus, the diagonal terms of line (\ref{eighth}) are bounded by $\dfrac{1}{N^k} \left(N^k - \dfrac{N!}{(N-k)!}\right)M^k$. Hence, we can bound lines (\ref{fifth}) and take the limit as $N$ approaches $\infty$,
\begin{eqnarray*}
&&\dfrac{N!}{(N-k)!}\left(\dfrac{(N-k)!}{N!}-\dfrac{1}{N^k}\right)M^k + \dfrac{1}{N^k}\left(N^k - 
\dfrac{N!}{(N-k)!}\right)M^k\\ 
&=& M^k\left[\left(\dfrac{N^k(N-k)!}{N!}-1\right)\dfrac{N!}{N^k(N-k)!} + \dfrac{N!}{N^k(N-k)!}\left(\dfrac{N^k(N-k)!}{N!} - 
1\right)\right]\\ 
&=& 2M^k\left[\dfrac{N!}{N^k(N-k)!}\left(\dfrac{N^k(N-k)!}{N!}-1\right)\right]=2M^k\left[1-\dfrac{N!}{N^k(N-k)!}\right] 
\xrightarrow[N \rightarrow \infty]{} 0.
\end{eqnarray*} So line (\ref{fifth}) goes to $0$ as $N$ approaches infinity.

For lines (\ref{sixth}) can be rewritten as
\begin{eqnarray*}
\nonumber&&\left|\text{tr}\left[\left(\prod\limits_{j=1}^k X_N(A_j) - \prod\limits_{j=1}^k \text{tr}(\rho A_j)1^{\otimes N}\right)\rho_N\right]\right|\\
&=&\left|\sum\limits_{l=0}^{k-1} \text{tr}\left[\left(\prod\limits_{j=1}^l \text{tr}(\rho A_j)\prod\limits_{j=l+1}^k X_N(A_j) - \prod\limits_{j=1}^{l+1} \text{tr}(\rho A_j)\prod\limits_{j=l+2}^k X_N(A_j)1^{\otimes N}\right)\rho_N\right]\right|\\
&=&\left|\sum\limits_{l=0}^{k-1} \text{tr}\left[\left(X_N(A_{l+1})-\text{tr}(\rho A_{l+1})1^{\otimes N}\right)\prod\limits_{j=1}^l \text{tr}(\rho A_j) \prod\limits_{j=l+2}^k X_N(A_j) \rho_N\right]\right|,
\end{eqnarray*} and can be bounded by
\begin{eqnarray}
\label{CorrelationErrors}&&\sum\limits_{l=0}^{k-1} \left|\text{tr}\left[\left(X_N(A_{l+1})-\text{tr}(\rho A_{l+1})1^{\otimes N}\right)\prod\limits_{j=1}^l \text{tr}(\rho A_j) \prod\limits_{j=l+2}^k X_N(A_j)\rho_N\right]\right|\\
\nonumber&\leq& \sum\limits_{l=0}^{k-1} \left(\sqrt{\text{tr}\left[\left|X_N(A_{l+1}^*)-\text{tr}(\rho A_{l+1}^*)1^{\otimes N}\right|^2\rho_N\right]}\cdot \sqrt{\text{tr}\left[\left|\prod\limits_{j=1}^l \text{tr}(\rho A_j) \prod\limits_{j=l+2}^k X_N(A_j)\right|^2 \rho_N\right]} \right)
\end{eqnarray} since by \cite[Lemma 2.3.10]{BratteliRobinson} applied to the positive linear functional $\omega(\cdot):=\text{tr}(\cdot\rho_N)$ we obtain $\left|\text{tr}\left(BC\rho_N\right)\right|\leq\text{tr}\left(\left|B^*\right|^2\rho_N\right)^{1/2}\text{tr}\left(\left|C\right|^2\rho_N\right)^{1/2}$  for any $B,C\in\cB(\mathbb{H}^{\otimes N})$. The quantities that appear inside the brackets of (\ref{CorrelationErrors}) resemble the so called ``correlation errors'' which are studied by Paul, Pulvirenti, and Simonella \cite{PaulPulvirentiSimonella} (see also references within). The rate of convergence of these quantities is called the ``size of chaos'' in \cite{PaulPulvirentiSimonella}.

By assumption, for each $l$,
\begin{eqnarray*}
\sqrt{\text{tr}\left[\left|X_N(A_{l+1}^*)-\text{tr}(\rho A_{l+1}^*)1^{\otimes N}\right|^2\rho_N\right]} \xrightarrow[N \rightarrow \infty]{} 0,
\end{eqnarray*} and if $M:=\max\limits_{j=1,...,k} ||A_j||_\infty$, since $||X_N(A_j)||_\infty \leq ||A_j||_\infty \leq M$, we have that
\begin{eqnarray*}
&&\text{tr}\left[\left|\prod\limits_{j=1}^l \text{tr}(\rho A_j) \prod\limits_{j=l+2}^k X_N(A_j)\right|^2 \rho_N\right] \leq \prod\limits_{j=1}^l |\text{tr}(\rho A_j)|^2 \text{tr}\left[\left| \prod\limits_{j=l+2}^k X_N(A_j)\right|^2 \rho_N\right]\\
&\leq& \prod\limits_{j=1}^l \left|\text{tr}(\rho A_j)\right|^2 \left|\left|\prod\limits_{j=l+2}^k X_N(A_j)\right|\right|_\infty^2 \left|\left|\rho_N\right|\right|_1 \leq M^{2k} \prod\limits_{j=1}^l \left|\text{tr}(\rho A_j)\right|^2
\end{eqnarray*} which is bounded independent of $N$. Hence, lines (\ref{sixth}) converges to $0$ as $N$ goes to infinity. Therefore, line (\ref{3imp1}) converges to $0$ as $N$ approaches infinity.
\end{proof}

We can extract from the proof of Theorem~\ref{MAIN} a bound on the rate of convergence of the limit in line~(\ref{ChaosQuantumConvergence}) given the rate the associated empirical measures convergence in line~(\ref{EmpiricalMeasureConvergence}). In other words, we are able to find a bound on how fast marginals of $\rho_N$ converge to a tensor product in the sense of 
condition $(2)$ of Proposition~\ref{EquivDefChaos} 
knowing how fast the quantum empirical measures converge. The following corollary makes this statement precise.

\begin{Cor} \label{referee}
For each $N\in\mathbb{N}$ and $A\in\cB(\mathbb{H})$, define $e_N(A)$ to be the quantity in line~(\ref{2imp3}), and for $k\in\mathbb{N}$ and each $A_1,A_2,...,A_k\in\cB(\mathbb{H})$, define $C_{k,N}(A_1,...,A_k)$ to be the quantity in line~(\ref{3imp1}). Then by examination of the proof of Theorem~\ref{MAIN}, we are able to see that
\begin{eqnarray*}
C_{k,N}(A_1,...,A_k)&\leq&\sum\limits_{l=0}^{k-1}\sqrt{e_N(A_{l+1}^*)}\left(\prod\limits_{j=1}^l\left|\text{tr}(\rho A_j)\right|^2\prod\limits_{j=l+2}^k||A_j||_{\infty}^2\right)\\
&+&2\prod\limits_{i=1}^k||A_i||_\infty\left(1-\dfrac{N!}{N^k(N-k)!}\right).
\end{eqnarray*}
\end{Cor}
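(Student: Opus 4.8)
The plan is to obtain the inequality by re-reading the proof of the implication $(3)\Rightarrow(1)$ of Theorem~\ref{MAIN} and replacing each qualitative assertion ``this term tends to $0$'' by the quantitative bound that was actually established. Recall that in that proof the quantity $C_{k,N}(A_1,\dots,A_k)$, which is line~(\ref{3imp1}), is split by the triangle inequality into line~(\ref{fifth}) plus line~(\ref{sixth}). It therefore suffices to record one explicit bound for each of these two lines and then add them; the left-hand side $C_{k,N}$ and the quantities $e_N(A_{l+1}^*)$ appearing on the right are, by definition, exactly the expressions in lines~(\ref{3imp1}) and~(\ref{2imp3}).

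For line~(\ref{fifth}) I would revisit the combinatorial estimate of the theorem. Using symmetry of $\rho_N$, line~(\ref{fifth}) was written as the trace against $\rho_N$ of the difference between the off-diagonal expression of line~(\ref{seventh}) (carrying the coefficient $(N-k)!/N!$) and the full expansion of line~(\ref{eighth}) (carrying the coefficient $1/N^k$), the latter splitting into off-diagonal and diagonal terms. The single point requiring attention is that the crude operator-norm bound $M^k=\max_j\|A_j\|_\infty^k$ used in the theorem can be sharpened to $\prod_{i=1}^k\|A_i\|_\infty$: every term in the expansion, diagonal or off-diagonal, is a tensor product in which each slot carries a product of some sub-collection of the $A_j$'s (the identity factors contributing nothing), so submultiplicativity of $\|\cdot\|_\infty$ on each slot and multiplicativity of the norm across the tensor factors give operator norm at most $\prod_{i=1}^k\|A_i\|_\infty$. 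Combining this with $\#E_{k,N}=N!/(N-k)!$, with $\|\rho_N\|_1=1$, and with the same arithmetic simplification performed in the theorem yields the bound $2\prod_{i=1}^k\|A_i\|_\infty\bigl(1-\tfrac{N!}{N^k(N-k)!}\bigr)$, which is the second summand of the claimed inequality.

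For line~(\ref{sixth}) I would simply retain the Cauchy--Schwarz estimate already displayed in the theorem. The telescoping identity followed by \cite[Lemma~2.3.10]{BratteliRobinson}, applied to the positive functional $\omega(\cdot)=\text{tr}(\cdot\,\rho_N)$, bounds line~(\ref{sixth}) by the sum over $l$ of the product of $\sqrt{\text{tr}[|X_N(A_{l+1}^*)-\text{tr}(\rho A_{l+1}^*)1^{\otimes N}|^2\rho_N]}$ and $\sqrt{\text{tr}[|\prod_{j=1}^l\text{tr}(\rho A_j)\prod_{j=l+2}^k X_N(A_j)|^2\rho_N]}$. By the definition of $e_N$ in line~(\ref{2imp3}) together with cyclicity of the trace, the first factor is exactly $\sqrt{e_N(A_{l+1}^*)}$. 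For the second factor I would use $\|X_N(A_j)\|_\infty\le\|A_j\|_\infty$ and $\|\rho_N\|_1=1$ to bound the trace inside it by $\prod_{j=1}^l|\text{tr}(\rho A_j)|^2\prod_{j=l+2}^k\|A_j\|_\infty^2$, producing the first summand. Adding the bounds for lines~(\ref{fifth}) and~(\ref{sixth}) completes the proof. I anticipate no genuine obstacle here, since the entire content is bookkeeping already implicit in Theorem~\ref{MAIN}; the only step demanding care is the sharpening from $M^k$ to $\prod_{i=1}^k\|A_i\|_\infty$ in line~(\ref{fifth}), which rests on checking that the diagonal terms, where several $A_j$'s multiply on a common tensor slot, still satisfy the product norm bound.
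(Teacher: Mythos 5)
Your approach coincides with the paper's: Corollary~\ref{referee} is meant to be read off by making quantitative the two estimates in the proof of the implication $(3)\Rightarrow(1)$ of Theorem~\ref{MAIN}, splitting line~(\ref{3imp1}) into line~(\ref{fifth}) plus line~(\ref{sixth}). Your treatment of line~(\ref{fifth}) is exactly right, and you correctly identified the one place where the theorem's proof must be sharpened: replacing $M^k=\max_j\|A_j\|_\infty^k$ by $\prod_{i=1}^k\|A_i\|_\infty$. This sharpening is valid (each off-diagonal term is a tensor product whose slots carry a single $A_j$ or the identity, and each diagonal term has slots carrying ordered products of disjoint sub-collections of the $A_j$'s, so submultiplicativity within each slot and multiplicativity of $\|\cdot\|_\infty$ across tensor factors give operator norm at most $\prod_{i=1}^k\|A_i\|_\infty$), and it is necessary to obtain the second summand in the form stated.

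There is, however, a slip in your treatment of line~(\ref{sixth}), occurring at precisely the point where the paper's own displayed inequality is problematic. The Cauchy--Schwarz estimate bounds each term by $\sqrt{e_N(A_{l+1}^*)}$ times the \emph{square root} of $\text{tr}\bigl[\bigl|\prod_{j=1}^l\text{tr}(\rho A_j)\prod_{j=l+2}^k X_N(A_j)\bigr|^2\rho_N\bigr]$. You correctly bound this trace by $\prod_{j=1}^l|\text{tr}(\rho A_j)|^2\prod_{j=l+2}^k\|A_j\|_\infty^2$, but the factor entering the estimate is the square root of the trace, so what your argument actually yields is $\sum_{l=0}^{k-1}\sqrt{e_N(A_{l+1}^*)}\,\prod_{j=1}^l|\text{tr}(\rho A_j)|\prod_{j=l+2}^k\|A_j\|_\infty$ --- first powers, not squares. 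Your phrase ``producing the first summand'' silently drops this square root, and the step is a genuine non sequitur: when the products are smaller than $1$ the squared version is the \emph{stronger} claim, so it does not follow from what was derived. To be fair, this defect is inherited from the paper, whose statement of Corollary~\ref{referee} carries the same squares even though examination of the proof of Theorem~\ref{MAIN} only supports the first-power bound; the honest conclusion of your argument (and of the paper's) is the corollary with $\prod_{j=1}^l|\text{tr}(\rho A_j)|\prod_{j=l+2}^k\|A_j\|_\infty$ in place of the squared product, and you should either state that corrected bound or explicitly flag the discrepancy.
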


The next corollary follows from Theorem~\ref{MAIN} and Proposition~\ref{EquivDefChaos}.
\begin{Cor}
Let $(\rho_N)_{N=1}^\infty$ be a sequence of symmetric density operators such that $\rho_N\in\mathcal{D}(\mathbb{H}^{\otimes N})$ for each $N \in\mathbb{N}$, and let $\rho\in\mathcal{D}(\mathbb{H})$. Then the following are equivalent.
\begin{enumerate}
\item $(\rho_N)_{N=1}^\infty$ is $\rho$-chaotic,
\item $\text{tr}\left|\rho_N^{(k)}-\rho^{\otimes k}\right|\xrightarrow[N\rightarrow\infty]{}0$ for all $k \in\mathbb{N}$, and
\item $\text{tr}\left|\rho_N^{(2)}-\rho^{\otimes 2}\right|\xrightarrow[N\rightarrow\infty]{}0$.
\end{enumerate}
\end{Cor}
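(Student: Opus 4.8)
The plan is to deduce this corollary purely by combining the two results already proved, Theorem~\ref{MAIN} and Proposition~\ref{EquivDefChaos}, since no new estimates are needed. I would organize the argument as a short cycle $(2)\Rightarrow(3)\Rightarrow(1)\Rightarrow(2)$. The implication $(2)\Rightarrow(3)$ is immediate, as condition $(3)$ is just the $k=2$ instance of condition $(2)$.

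For $(3)\Rightarrow(1)$, I would invoke Proposition~\ref{EquivDefChaos} at $k=2$: its equivalence of conditions $(1)$ and $(4)$ there says precisely that $\text{tr}|\rho_N^{(2)}-\rho^{\otimes 2}|\to 0$ holds if and only if $(\rho_N)_{N=1}^\infty$ is $2$-$\rho$-chaotic. Thus condition $(3)$ of the corollary yields $2$-$\rho$-chaoticity, and then the implication $(2)\Rightarrow(1)$ of Theorem~\ref{MAIN} upgrades this single value $k=2$ to full $\rho$-chaoticity. For the remaining implication $(1)\Rightarrow(2)$, I would fix an arbitrary $k\in\mathbb{N}$; by definition $\rho$-chaoticity means $(\rho_N)_{N=1}^\infty$ is $k$-$\rho$-chaotic, so applying the implication $(1)\Rightarrow(4)$ of Proposition~\ref{EquivDefChaos} at this $k$ gives $\text{tr}|\rho_N^{(k)}-\rho^{\otimes k}|\to 0$. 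Since $k$ was arbitrary, condition $(2)$ follows.

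The one point worth keeping clear is that the two prior results play complementary roles: Proposition~\ref{EquivDefChaos} is a fixed-$k$ statement that translates $k$-$\rho$-chaoticity into the trace-norm convergence $\text{tr}|\rho_N^{(k)}-\rho^{\otimes k}|\to 0$, whereas Theorem~\ref{MAIN} is exactly what allows the single value $k=2$ to control all $k$ at once. Consequently I do not expect any genuine obstacle here; the entire content of the corollary already resides in those two results, and this proof is essentially a bookkeeping combination of them.
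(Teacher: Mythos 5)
Your proof is correct and is exactly the argument the paper intends: the paper states only that the corollary ``follows from Theorem~\ref{MAIN} and Proposition~\ref{EquivDefChaos},'' and your cycle $(2)\Rightarrow(3)\Rightarrow(1)\Rightarrow(2)$, using Proposition~\ref{EquivDefChaos} to translate between $k$-$\rho$-chaoticity and trace-norm convergence at each fixed $k$ and Theorem~\ref{MAIN} to upgrade $k=2$ to all $k$, is precisely that combination.
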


\section{Propagation of Chaos}

Spohn proved that under evolutions governed by certain families of Hamiltonians, chaotic sequences of density operators propagate in time \cite[Theorem 5.7]{Spohn}. In this section, we will use the ideas of the proofs of Ducomet~\cite[Theorem 3.1]{Ducomet}, and Bardos, Golse, Gottlieb, and Mauser~\cite[Theorem 3.1]{Bardos} to give a simple, different proof to the result of Spohn. First, we define propagation of chaos.
\begin{Def}\label{PropagationChaos}
Let $(\rho_N(0))_{N=1}^\infty$ be a sequence of density operators and let $(H_N)_{N=1}^\infty$ be a sequence of Hamiltonians where $\rho_N(0) \in \mathcal{D}(\mathbb{H}^{\otimes N})$ and $H_N \in \cB(\mathbb{H}^{\otimes N})$ for every $N\in\N$. For each $t \geq 0$ and $N\in\N$, define the density operator 
\begin{eqnarray}\label{NewDensity}
\rho_N(t):=e^{-itH_N}\rho_N(0)e^{itH_N} \in \mathcal{D}(\mathbb{H}^{\otimes N}).
\end{eqnarray}
If, for each fixed $t \geq 0$, the sequence $(\rho_N(t))_{N=1}^\infty$ is $\rho(t)$-chaotic for some $\rho(t)\in\mathcal{D}(\mathbb{H})$, then we say that chaos \textbf{propagates with respect to $(H_N)_{N=1}^\infty$}.
\end{Def}

We will now construct, as in Spohn~\cite{Spohn}, examples of propagation of chaos. We will examine the mean field limit for interacting quantum particles, see \cite[pages 609 - 613]{Spohn}. For each $N\in\N$ and $\pi \in \Sigma_N$, define the unitary operator $U_\pi^{[N]} \in \cB(\mathbb{H}^{\otimes N})$ by equation (\ref{unitarty}). For $A \in \cB(\mathbb{H})$, $V \in \cB(\mathbb{H}\otimes \mathbb{H})$, $N\in\N$, and $j\in\{1,...,N\}$, define $$A_j^{[N]}:=1^{\otimes (j-1)}\otimes A \otimes 1^{\otimes (N-j-1)} \in \cB(\mathbb{H}^{\otimes N}),$$ $$V_{12}^{[N]}:=V\otimes 1^{\otimes(N-2)},$$ and $$V_{ij}^{[N]}=U_{\pi^{-1}}^{[N]}V_{12}^{[N]}U_{\pi}^{[N]}$$ where $\pi$ is any permutation where $\pi(i)=1$ and $\pi(j)=2$. Notice that this operator is well defined and independent of the permutation $\pi$ that we use, (as long as $\pi(i)=1$ and $\pi(j)=2$) because when applied to a simple tensor $x_1\otimes\cdots\otimes x_N$ all but the $x_i$ and $x_j$ spots are left invariant. For any self-adjoint $A \in \cB(\mathbb{H})$, any self-adjoint $V\in\cB(\mathbb{H}\otimes\mathbb{H})$, and each $N \in \mathbb{N}$, consider the Hamiltonian
\begin{eqnarray}\label{HamiltonianSeq}
H_N=\sum\limits_{j=1}^N A_j^{[N]} + \dfrac{1}{N}\sum\limits_{i \neq j;i,j=1}^N V_{ij}^{[N]}.
\end{eqnarray} Also, define 
\begin{eqnarray}\label{ReducedHamSeq}
H_{n,N}:=\sum\limits_{j=1}^n A_j^{[n]} + \dfrac{1}{N}\sum\limits_{i \neq j;i,j=1}^n V_{ij}^{[n]}
\end{eqnarray} for each $n,N \in \mathbb{N}$, $n \leq N$. 

The main result of this section is Theorem~\ref{PropResult}. In this theorem, we will assume that a sequence of density operators $(\rho_N(0))_{N=1}^\infty$ is $\rho(0)$-chaotic and we will show that if $(H_N)_{N=1}^\infty$ is defined by equation (\ref{HamiltonianSeq}) and for all $t\geq 0$, $(\rho_N(t))_{N=1}^\infty$ is defined by equation (\ref{NewDensity}), then for all $t\geq 0$ the sequence $(\rho_N(t))_{N=1}^\infty$ is $\rho(t)$-chaotic for some $\rho(t) \in \mathcal{D}(\mathbb{H})$, i.e. chaos propagates with respect to $(H_N)_{N=1}^\infty$. Before proving our main result (Theorem~\ref{PropResult}), we need to establish some preliminary results. The first preliminary result consists of proving that $\rho_N(t)$ is symmetric for each $N\in\N$ and $t \geq 0$.

\begin{Prop}
For each $N \in \mathbb{N}$ and $t \geq 0$, $\rho_N(t)$ (as defined in equation (\ref{NewDensity})) is symmetric.
\end{Prop}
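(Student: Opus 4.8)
The plan is to reduce symmetry of $\rho_N(t)$ to symmetry of $\rho_N(0)$ by showing that the conjugating unitaries $e^{\pm itH_N}$ are compatible with the permutation operators $U_\pi^{[N]}$. By Proposition~\ref{SymmEquivGottlieb}, $\rho_N(t)$ is symmetric (in the sense of Definition~\ref{Symm}) if and only if $U_\pi^{[N]}\rho_N(t)=\rho_N(t)U_\pi^{[N]}$ for every $\pi\in\Sigma_N$, and the same characterization applies to $\rho_N(0)$, which we know to be symmetric. So the entire problem is to propagate this commutation relation through the conjugation in equation~(\ref{NewDensity}).

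First I would establish the key commutation identity
\begin{eqnarray*}
U_\pi^{[N]}H_N=H_N U_\pi^{[N]}\qquad\text{for all }\pi\in\Sigma_N.
\end{eqnarray*}
This is where the structure of $H_N$ in equation~(\ref{HamiltonianSeq}) is used. Conjugating the one-body part $\sum_{j=1}^N A_j^{[N]}$ by $U_\pi^{[N]}$ simply permutes the summands $A_j^{[N]}\mapsto A_{\pi(j)}^{[N]}$ (by the definition of $U_\pi^{[N]}$ in equation~(\ref{unitarty}), conjugation moves the operator $A$ from slot $j$ to slot $\pi(j)$), and since we sum over all $j$ the sum is invariant. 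Likewise, by the very definition $V_{ij}^{[N]}=U_{\sigma^{-1}}^{[N]}V_{12}^{[N]}U_\sigma^{[N]}$, conjugation of the two-body term sends $V_{ij}^{[N]}\mapsto V_{\pi(i)\pi(j)}^{[N]}$, and since the double sum runs over all ordered pairs $i\neq j$ it too is invariant. Hence $U_\pi^{[N]}H_N(U_\pi^{[N]})^{-1}=H_N$, which is the desired identity.

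Once $U_\pi^{[N]}$ commutes with $H_N$, it commutes with any function of $H_N$; in particular $U_\pi^{[N]}e^{\pm itH_N}=e^{\pm itH_N}U_\pi^{[N]}$. Using $\rho_N(t)=e^{-itH_N}\rho_N(0)e^{itH_N}$ and the symmetry of $\rho_N(0)$, I would then compute
\begin{eqnarray*}
U_\pi^{[N]}\rho_N(t)&=&U_\pi^{[N]}e^{-itH_N}\rho_N(0)e^{itH_N}=e^{-itH_N}U_\pi^{[N]}\rho_N(0)e^{itH_N}\\
&=&e^{-itH_N}\rho_N(0)U_\pi^{[N]}e^{itH_N}=e^{-itH_N}\rho_N(0)e^{itH_N}U_\pi^{[N]}=\rho_N(t)U_\pi^{[N]},
\end{eqnarray*}
which gives the commutation relation for $\rho_N(t)$ and hence its symmetry via Proposition~\ref{SymmEquivGottlieb}.

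The only genuine obstacle is verifying the commutation identity for $H_N$, and within that the two-body term requires the most care: one must check that conjugating $V_{ij}^{[N]}$ by $U_\pi^{[N]}$ indeed yields $V_{\pi(i)\pi(j)}^{[N]}$, using the well-definedness of $V_{ij}^{[N]}$ (its independence of the auxiliary permutation used to define it, as noted just after equation~(\ref{ReducedHamSeq})) and the group law $U_\pi^{[N]}U_\sigma^{[N]}=U_{\pi\sigma}^{[N]}$. Everything else is a short formal manipulation.
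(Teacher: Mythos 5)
Your proposal is correct and follows essentially the same route as the paper's proof: reduce via Proposition~\ref{SymmEquivGottlieb} to showing $U_\pi^{[N]}$ commutes with $H_N$ (hence with $e^{\pm itH_N}$), and verify this separately for the one-body sum $\sum_j A_j^{[N]}$ and the two-body sum $\sum_{i\neq j}V_{ij}^{[N]}$ by checking that conjugation merely permutes the summands. The only cosmetic difference is the direction of conjugation (the paper computes $U_{\pi^{-1}}^{[N]}V_{ij}^{[N]}U_{\pi}^{[N]}=V_{\pi^{-1}(i)\,\pi^{-1}(j)}^{[N]}$ where you obtain $V_{\pi(i)\,\pi(j)}^{[N]}$), which is immaterial since the sums run over all ordered pairs.
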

\begin{proof}
Let $\pi \in \Sigma_N$. By Proposition~\ref{SymmEquivGottlieb}, we must show that $U_{\pi^{-1}}^{[N]}e^{-itH_N}\rho_N(0) e^{itH_N}U_\pi^{[N]} = U_{\pi^{-1}}^{[N]} \rho_N(t)U_\pi^{[N]}=\rho_N(t)$. Since $\rho_N(0)$ is symmetric, it is enough to show that $U_{\pi^{-1}}^{[N]}e^{itH_N}U_\pi^{[N]}=e^{itH_N}$. Furthermore, it is enough to show that $U_{\pi^{-1}}^{[N]}H_NU_\pi^{[N]} = H_N$.

First we prove that $U_{\pi^{-1}}^{[N]}\sum\limits_{j=1}^N A_j^{[N]} U_\pi^{[N]} = \sum\limits_{j=1}^N A_j^{[N]}$. Indeed,
\begin{eqnarray*}
&&U_{\pi^{-1}}^{[N]} \sum\limits_{j=1}^N A_j^{[N]}U_\pi^{[N]}(x_1 \otimes \cdots \otimes x_N)=\sum\limits_{j=1}^N U_{\pi^{-1}}^{[N]}A_j^{[N]}(x_{\pi^{-1}(1)}\otimes \cdots \otimes x_{\pi^{-1}(N)})\\
&&= \sum\limits_{j=1}^N U_{\pi^{-1}}^{[N]}(x_{\pi^{-1}(1)} \otimes \cdots \otimes x_{\pi^{-1}(j-1)} \otimes A(x_{\pi^{-1}(j)}) \otimes x_{\pi^{-1}(j+1)} \otimes \cdots \otimes x_{\pi^{-1}(N)})\\
&&=\sum\limits_{j=1}^N x_1 \otimes \cdots \otimes x_{\pi^{-1}(j)-1} \otimes A(x_{\pi^{-1}(j)}) \otimes x_{\pi^{-1}(j)+1} \otimes \cdots \otimes x_N\\
&&= \sum\limits_{j=1}^N x_1 \otimes \cdots \otimes x_{j-1} \otimes Ax_j \otimes x_{j+1} \otimes \cdots \otimes x_N = \sum\limits_{j=1}^N A_j^{[N]}(x_1 \otimes \cdots \otimes x_N).
\end{eqnarray*}

Next, will will show that $U_{\pi^{-1}}^{[N]}\sum\limits_{i\neq j;i,j = 1}^N V_{ij}^{[N]}U_{\pi}^{[N]}=\sum\limits_{i\neq j;i,j = 1}^N V_{ij}^{[N]}$. For each $i,j\in\{1,...,N\}$ with $i \neq j$, choose $\sigma_{ij}\in\Sigma_N$ with $\sigma_{ij}(i)=1$ and $\sigma_{ij}(j)=2$. Then
\begin{eqnarray}
\nonumber U_{\pi^{-1}}^{[N]}\sum\limits_{i\neq j;i,j=1}^N V_{ij}^{[N]} U_\pi^{[N]} &=& \sum\limits_{i\neq j;i,j=1}^N U_{\pi^{-1}}^{[N]} V_{ij}^{[N]}U_\pi^{[N]} = \sum\limits_{i \neq j;i,j=1}^N U_{\pi^{-1}}^{[N]} U_{\sigma_{ij}^{-1}}^{[N]}V_{12}^{[N]}U_{\sigma_{ij}}^{[N]} U_\pi^{[N]}\\
\label{newperm}&=& \sum\limits_{i\neq j;i,j=1}^N U_{(\sigma_{ij} \pi)^{-1}}^{[N]} V_{12}^{[N]} U_{\sigma_{ij}\pi}^{[N]}.
\end{eqnarray} Notice that $(\sigma_{ij}\pi)(\pi^{-1}(i))=1$ and $(\sigma_{ij}\pi)(\pi^{-1}(j))=2$, and thus, line (\ref{newperm}) is equal to
\begin{eqnarray*}
\sum\limits_{i\neq j;i,j=1}^N V_{\pi^{-1}(i) \pi^{-1}(j)}^{[N]} = \sum\limits_{i \neq j;i,j=1}^N V_{ij}^{[N]},
\end{eqnarray*} where the last equality is valid because $i\neq j$ if and only if $\pi^{-1}(i)\neq\pi^{-1}(j)$.
Thus, we obtain that $U_{\pi^{-1}}^{[N]}\sum\limits_{i\neq j = 1}^N V_{ij}^{[N]}U_{\pi}^{[N]}=\sum\limits_{i\neq j = 1}^N V_{ij}^{[N]}$. Hence, we have that $U_{\pi^{-1}}^{[N]}H_NU_\pi^{[N]}=H_N$, and $\rho_N(t)$ is symmetric.
\end{proof}

We are aiming to construct two similar families of differential equations for $\left(\rho_N^{(n)}(t)\right)_{n=1}^{N-1}$ and $\left(\rho(t)^{\otimes n}\right)_{n=1}^\infty$. The following Proposition gives a family of differential equations which is satisfied by $\left(\rho_N^{(n)}(t)\right)_{n=1}^{N-1}$.
\begin{Prop}\label{Hartree1}
Let $N\in\mathbb{N}$. For $n\in\mathbb{N}$, $n\leq N-1$, and $t\geq 0$, we have
\begin{eqnarray}\label{rhoNDiffEq}
i\dfrac{d}{dt}\rho_N^{(n)}(t)=[H_{n,N},\rho_N^{(n)}(t)]+\dfrac{N-n}{N}\sum\limits_{j=1}^n \text{tr}_{\{n+1\}}[V_{j \,n+1}^{[n+1]}+V_{n+1\, j}^{[n+1]},\rho_{N}^{(n+1)}(t)]
\end{eqnarray} where $\rho_N(t)$ is given by (\ref{NewDensity}) and $H_{n,N}$ is given by (\ref{ReducedHamSeq}).
\end{Prop}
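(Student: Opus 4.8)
The plan is to start from the von Neumann equation for the full system and then project it onto the first $n$ coordinates by a partial trace. Since $H_N \in \cB(\mathbb{H}^{\otimes N})$ is bounded and self-adjoint, $t\mapsto e^{-itH_N}$ is norm-differentiable, so differentiating (\ref{NewDensity}) gives, in trace norm,
\begin{eqnarray*}
i\frac{d}{dt}\rho_N(t)=[H_N,\rho_N(t)].
\end{eqnarray*}
Because $\text{tr}_{[n+1,N]}$ is a contraction for the trace norm, it commutes with the $t$-derivative, and applying it to both sides yields $i\frac{d}{dt}\rho_N^{(n)}(t)=\text{tr}_{[n+1,N]}([H_N,\rho_N(t)])$. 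The whole task is then to compute this right-hand side and recognize it as (\ref{rhoNDiffEq}).

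Next I would record three standard facts about the partial trace (see \cite{Attal}), writing $C=[n+1,N]$ and $\bar C=[1,n]$: (i) if $B=B'\otimes 1_C$ is supported on $\bar C$, then $\text{tr}_C([B,\rho_N(t)])=[B',\rho_N^{(n)}(t)]$; (ii) if $B=1_{\bar C}\otimes B''$ is supported on $C$, then $\text{tr}_C([B,\rho_N(t)])=0$; and (iii) for any unitary $W$ on the $C$-factor, $\text{tr}_C((1_{\bar C}\otimes W)M(1_{\bar C}\otimes W^{-1}))=\text{tr}_C(M)$. I would then split $H_N=\sum_{j=1}^N A_j^{[N]}+\frac1N\sum_{i\neq j}V_{ij}^{[N]}$ and classify the resulting commutators by the position of the indices relative to $n$. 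For the one-body part, the terms with $j\leq n$ contribute $\sum_{j=1}^n[A_j^{[n]},\rho_N^{(n)}(t)]$ by (i), while the terms with $j>n$ vanish by (ii). Among the two-body terms, those with both $i,j\leq n$ contribute $\frac1N\sum_{i\neq j;\,i,j\leq n}[V_{ij}^{[n]},\rho_N^{(n)}(t)]$ by (i), and those with both $i,j>n$ vanish by (ii). Together these account for $[H_{n,N},\rho_N^{(n)}(t)]$, the first term of (\ref{rhoNDiffEq}).

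The heart of the argument, and the step I expect to be the main obstacle, is the cross terms in which exactly one index lies in $\bar C$ and the other in $C$. Consider a term with $i\leq n<j$. Using the symmetry of $\rho_N(t)$ (established in the preceding Proposition) together with the transposition $\tau=(j\;\,n+1)$, which fixes $[1,n]$ pointwise and hence acts as $1_{\bar C}\otimes W$ with $W$ unitary on the $C$-factor, the transformation rule for the $V_{ij}^{[N]}$ derived in the previous proof gives $U_\tau^{[N]}V_{i,n+1}^{[N]}U_\tau^{[N]}=V_{ij}^{[N]}$, and symmetry gives $U_\tau^{[N]}\rho_N(t)U_\tau^{[N]}=\rho_N(t)$. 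Hence $[V_{ij}^{[N]},\rho_N(t)]=U_\tau^{[N]}[V_{i,n+1}^{[N]},\rho_N(t)]U_\tau^{[N]}$, and applying (iii) yields
\begin{eqnarray*}
\text{tr}_C([V_{ij}^{[N]},\rho_N(t)])=\text{tr}_C([V_{i,n+1}^{[N]},\rho_N(t)]).
\end{eqnarray*}
Thus all $N-n$ choices of $j>n$ give the same contribution.

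To finish, I would factor the trace as $\text{tr}_C=\text{tr}_{\{n+1\}}\circ\text{tr}_{[n+2,N]}$ and apply (i) to the inner trace, since $V_{i,n+1}^{[N]}=V_{i,n+1}^{[n+1]}\otimes 1_{[n+2,N]}$ is supported on $[1,n+1]$; this reduces the common contribution to $\text{tr}_{\{n+1\}}([V_{i,n+1}^{[n+1]},\rho_N^{(n+1)}(t)])$. Summing over $i\leq n$ and the $N-n$ indices $j>n$ produces the factor $\frac{N-n}{N}$ together with the $V_{j\,n+1}^{[n+1]}$ part of the collision term, and the symmetric case $j\leq n<i$ produces the $V_{n+1\,j}^{[n+1]}$ part. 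Collecting all contributions gives exactly (\ref{rhoNDiffEq}). The main care needed is the bookkeeping of index ranges and the $(N-n)$ counting, and checking that the symmetry reduction and the two-step partial trace are applied in a compatible order.
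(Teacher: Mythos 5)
Your proof is correct, and strategically it follows the same route as the paper: differentiate the von Neumann equation $i\frac{d}{dt}\rho_N(t)=[H_N,\rho_N(t)]$, apply $\text{tr}_{[n+1,N]}$, split $H_N$ into one-body and two-body parts, kill the terms supported entirely on the traced factors, recognize the terms supported on the first $n$ factors as $[H_{n,N},\rho_N^{(n)}(t)]$, and collapse the cross terms by symmetry into $N-n$ copies of a single term, producing the factor $\frac{N-n}{N}$. Where you genuinely differ is in how the partial-trace manipulations are carried out. The paper never states operator-level identities for $\text{tr}_C$; instead every step is verified weakly, by pairing against an arbitrary test observable $B\in\cB(\mathbb{H}^{\otimes n})$ and invoking the duality relation of \cite{Attal} (equation (2.11) there), and in particular the crucial symmetry step appears only in the informal phrasing ``by symmetry of $\rho_N(t)$ we can replace all of these values of $j$ by $n+1$'' inside a trace computation. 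You work at the operator level, via the module properties (i)--(ii) of the partial trace, the invariance (iii) under unitaries acting on the traced factors, and the explicit conjugation $[V_{ij}^{[N]},\rho_N(t)]=U_\tau^{[N]}[V_{i\,n+1}^{[N]},\rho_N(t)]U_\tau^{[N]}$ for the transposition $\tau=(j\;n+1)$, which is legitimate since $\tau$ fixes $\{1,\dots,n\}$ pointwise so that $U_\tau^{[N]}$ has the form $1^{\otimes n}\otimes W$. Your version buys a fully rigorous, reusable form of the step the paper leaves informal (your conjugation argument is exactly what justifies the paper's ``replace $j$ by $n+1$''), at the cost of needing facts (i)--(iii), which you should prove or cite --- each follows in one line from the same duality relation the paper uses. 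The paper's weak formulation buys economy: a single identity suffices for every step, though the resulting computation is longer and the symmetry step is less transparent. The index bookkeeping and the $(N-n)$ count in your argument match the paper's, so there is no gap.
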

\begin{proof}
We know
\begin{eqnarray*}
i\dfrac{d}{dt}\rho_N(t)=\left[H_N,\rho_N(t)\right].
\end{eqnarray*} 
Taking the partial trace of both sides, and using the fact that partial traces and derivatives commute, we obtain
\begin{eqnarray}\label{DiffEq1}
i\dfrac{d}{dt}\rho_N^{(n)}(t)=\text{tr}_{[n+1,N]}\left[H_N,\rho_N(t)\right].
\end{eqnarray} We claim that 
\begin{eqnarray}
\nonumber&&\text{tr}_{[n+1,N]}\left(\left[H_N,\rho_N(t)\right]\right)\\
\label{claim1}&&=[H_{n,N},\rho_N^{(n)}(t)]+\dfrac{N-n}{N}\sum\limits_{j=1}^n \text{tr}_{\{n+1\}}[V_{j \,n+1}^{[n+1]}+V_{n+1\, j}^{[n+1]},\rho_{N}^{(n+1)}(t)].
\end{eqnarray} In order to prove equation~(\ref{claim1}), by line \cite[equation (2.11)]{Attal}, we need to prove that for every $B \in \cB(\mathbb{H}^{\otimes n})$
\begin{eqnarray*}
\text{tr}\left([H_N,\rho_N(t)]B\otimes 1^{\otimes (N-n)}\right)&=&\text{tr}\left(\left[\sum\limits_{j=1}^n A_j^{[n]} + \dfrac{1}{N}\sum\limits_{i\neq j = 1}^n V_{ij}^{[n]},\rho_N^{(n)}(t)\right]B\right)\\
&+&\dfrac{N-n}{N}\sum\limits_{j=1}^n \text{tr}\left(\text{tr}_{\{n+1\}}[V_{j\, n+1}^{[n+1]}+V_{n+1,\,j}^{[n+1]},\rho_{N}^{(n+1)}(t)]B\right).
\end{eqnarray*}

Let $B \in \cB(\mathbb{H}^{\otimes n})$, and we have \begin{eqnarray}
\nonumber&&\text{tr}\left(\left[\sum\limits_{j=1}^N A_j^{[N]} + \dfrac{1}{N}\sum\limits_{i \neq j;i,j=1}^N V_{ij}^{[N]},\rho_N(t)\right]B \otimes 1^{\otimes (N-n)}\right)
\end{eqnarray}
\begin{eqnarray}
\nonumber&&=\text{tr}\left(\sum\limits_{j=1}^N A_j^{[N]} \rho_N(t)\,\,B\otimes 1^{\otimes (N-n)} + \dfrac{1}{N}\sum\limits_{i \neq j;i,j=1}^N V_{ij}^{[N]}\rho_N(t)\,\, B \otimes 1^{\otimes (N-n)}\right.\\
\nonumber&&\left. - \rho_N(t)\sum\limits_{j=1}^N A_j^{[N]} \,\,B\otimes 1^{\otimes (N-n)} - \dfrac{1}{N}\rho_N(t)\sum\limits_{i \neq j;i,j =1}^N V_{ij}^{[N]}\,\,B \otimes 1^{\otimes (N-n)}\right)\\
\nonumber&&=\text{tr}\left(B\otimes 1^{\otimes (N-n)} \,\, \sum\limits_{j=1}^N A_j^{[N]} \rho_N(t) + \dfrac{1}{N} B\otimes 1^{\otimes (N-n)} \,\, \sum\limits_{i \neq j;i,j = 1}^N V_{ij}^{[N]} \rho_N(t)\right.\\
\nonumber&&\left. - \rho_N(t)\sum\limits_{j=1}^N A_j^{[N]} \,\, B\otimes 1^{\otimes (N-n)} - \dfrac{1}{N}\rho_N(t) \sum\limits_{i \neq j;i,j =1}^N V_{ij}^{[N]}\,\, B\otimes 1^{\otimes (N-n)}\right)\\
\label{As}&&=\text{tr}\left(B\otimes 1^{\otimes (N-n)} \,\, \sum\limits_{j=1}^N A_j^{[N]} \rho_N(t)- \rho_N(t)\sum\limits_{j=1}^N A_j^{[N]} \,\, B\otimes 1^{\otimes (N-n)}\right)\\
\label{Vs}&&+\text{tr}\left(\dfrac{1}{N} B\otimes 1^{\otimes (N-n)} \,\, \sum\limits_{i \neq j;i,j = 1}^N V_{ij}^{[N]} \rho_N(t)- \dfrac{1}{N}\rho_N(t) \sum\limits_{i \neq j;i,j =1}^N V_{ij}^{[N]}\,\, B\otimes 1^{\otimes (N-n)}\right).
\end{eqnarray} Line (\ref{As}) can be rewritten as
\begin{eqnarray}
\label{Asn}&&\text{tr}\left(B\otimes 1^{\otimes (N-n)} \,\, \sum\limits_{j=1}^n A_j^{[N]} \rho_N(t)\right)-\text{tr}\left(\rho_N(t)\sum\limits_{j=1}^n A_j^{[N]} \,\, B\otimes 1^{\otimes (N-n)}\right)\\
\label{AsN}&&+\text{tr}\left(B\otimes 1^{\otimes (N-n)} \,\, \sum\limits_{j=n+1}^N A_j^{[N]}\rho_N(t)\right)-\text{tr}\left(\sum\limits_{j=n+1}^N A_j^{[N]} \,\, B\otimes 1^{\otimes (N-n)}\rho_N(t)\right).
\end{eqnarray}
Notice that $B\otimes 1^{\otimes (N-n)} \,\, \sum\limits_{j=n+1}^N A_j^{[N]} = \sum\limits_{j=n+1}^N A_j^{[N]} \,\, B\otimes I^{\otimes (N-n)}$, and so line (\ref{AsN}) is equal to zero (even without taking the trace into account). Notice that in line (\ref{Asn}), $A_j^{[N]}=A_j^{[n]}\otimes 1^{\otimes (N-n)}$ for $j\leq n$, thus line (\ref{Asn}) can be written as
\begin{eqnarray*}
&&\text{tr}\left(B\sum\limits_{j=1}^n A_j^{[n]}\otimes 1^{\otimes (N-n)} \rho_N(t)\right)-\text{tr}\left(\rho_N(t)\sum\limits_{j=1}^n A_j^{[n]}B\,\,\otimes 1^{\otimes (N-n)}\right)\\
&&=\text{tr}\left(\rho_N^{(n)}(t) B\sum\limits_{j=1}^n A_j^{[n]}\right) - \text{tr}\left(\rho_N^{(n)}(t)\sum\limits_{j=1}^n A_j^{[n]} B\right) \hskip.2in \left(\text{by \cite[equation (2.11)]{Attal}}\right)\\
&&=\text{tr}\left(\left[\sum\limits_{j=1}^n A_j^{[n]},\rho_N^{(n)}(t)\right]B\right).
\end{eqnarray*}

Line (\ref{Vs}) can be rewritten as
\begin{eqnarray}
\nonumber&&\text{tr}\left(\dfrac{1}{N} B\otimes 1^{\otimes (N-n)} \,\, \sum\limits_{i \neq j;i,j = 1}^n V_{ij}^{[N]} \rho_N(t)\right)+\text{tr}\left(\dfrac{1}{N}B\otimes 1^{\otimes (N-n)} \,\, \sum\limits_{1 \leq i \leq n < j \leq N} V_{ij}^{[N]}\rho_N(t)\right)\\
\nonumber&&+\text{tr}\left(\dfrac{1}{N}B\otimes 1^{\otimes (N-n)} \,\, \sum\limits_{1 \leq j \leq n < i \leq N}V_{ij}^{[N]}\rho_N(t)\right)+\text{tr}\left(\dfrac{1}{N}B \otimes 1^{\otimes (N-n)} \,\, \sum\limits_{i \neq j;i,j =n+1}^N V_{ij}^{[N]}\rho_N(t)\right)\\
\nonumber&&-\text{tr}\left(\dfrac{1}{N}\rho_N(t) \sum\limits_{i \neq j;i,j =1}^n V_{ij}^{[N]}\,\, B\otimes 1^{\otimes (N-n)}\right)-\text{tr}\left(\dfrac{1}{N}\rho_N(t) \sum\limits_{1 \leq i \leq n < j \leq N} V_{ij}^{[N]}\,\, B\otimes 1^{\otimes (N-n)}\right)\\
\nonumber&&-\text{tr}\left(\dfrac{1}{N}\rho_N(t) \sum\limits_{1 \leq j \leq n < i \leq N} V_{ij}^{[N]}\,\, B\otimes 1^{\otimes (N-n)}\right)-\text{tr}\left(\dfrac{1}{N}\rho_N(t) \sum\limits_{i \neq j;i,j = n+1}^N V_{ij}^{[N]}\,\, B\otimes 1^{\otimes (N-n)}\right).
\end{eqnarray} 

The first and fifth terms of the above expression give
\begin{eqnarray}
\label{Vsn}&&\text{tr}\left(\dfrac{1}{N} B\otimes 1^{\otimes (N-n)} \,\, \sum\limits_{i \neq j;i,j = 1}^n V_{ij}^{[N]} \rho_N(t)\right)-\text{tr}\left(\dfrac{1}{N}\rho_N(t) \sum\limits_{i \neq j;i,j =1}^n V_{ij}^{[N]}\,\, B\otimes 1^{\otimes (N-n)}\right).
\end{eqnarray} The second and sixth terms of the same expression give
\begin{eqnarray}
\label{Vsij}&&\text{tr}\left(\dfrac{1}{N}B\otimes 1^{\otimes (N-n)} \sum\limits_{1 \leq i \leq n < j \leq N}V_{ij}^{[N]}\rho_N(t)\right)-\text{tr}\left(\dfrac{1}{N}\rho_N(t) \sum\limits_{1 \leq i \leq n < j \leq N} V_{ij}^{[N]} B\otimes 1^{\otimes (N-n)}\right).
\end{eqnarray} The third and seventh terms of the same expression give
\begin{eqnarray}
\label{Vsji}&&\text{tr}\left(\dfrac{1}{N}B\otimes 1^{\otimes (N-n)}  \sum\limits_{1 \leq j \leq n < i \leq N} V_{ij}^{[N]}\rho_N(t)\right)-\text{tr}\left(\dfrac{1}{N}\rho_N(t) \sum\limits_{1 \leq j \leq n < i \leq N} V_{ij}^{[N]} B\otimes 1^{\otimes (N-n)}\right).
\end{eqnarray} The fourth and eighth terms of the same expression give
\begin{eqnarray}
\label{VsN}&&\text{tr}\left(\dfrac{1}{N}B \otimes 1^{\otimes (N-n)} \sum\limits_{i \neq j;i,j =n+1}^N V_{ij}^{[N]}\rho_N(t)\right)-\text{tr}\left(\dfrac{1}{N} \sum\limits_{i \neq j;i,j = n+1}^N V_{ij}^{[N]} B\otimes 1^{\otimes (N-n)}\rho_N(t)\right).
\end{eqnarray}

Notice that $B \otimes I^{\otimes (N-n)}\,\, \sum\limits_{i \neq j;i,j =n+1}^N V_{ij}^{[N]}=\sum\limits_{i \neq j;i,j =n+1}^N V_{ij}^{[N]}\,\,B \otimes I^{\otimes (N-n)}$, and so line (\ref{VsN}) is equal to zero (even without taking the trace into account).

Notice that $V_{ij}^{[N]}=V_{ij}^{[n]}\otimes 1^{\otimes (N-n)}$ for $i,j\leq n$,, thus line (\ref{Vsn}) can be rewritten as
\begin{eqnarray*}
&&\dfrac{1}{N}\text{tr}\left(B\sum\limits_{i\neq j; i,j=1}^n V_{ij}^{[n]}\otimes 1^{\otimes (N-n)} \,\, \rho_N(t)\right)-\dfrac{1}{N}\text{tr}\left(\sum\limits_{i\neq j;i,j=1}^n V_{ij}^{[n]}B\otimes 1^{\otimes (N-n)} \,\, \rho_N(t)\right)\\
&&=\dfrac{1}{N}\text{tr}\left(\rho_N^{(n)}(t)B\sum\limits_{i\neq j;i,j=1}^n V_{ij}^{[n]}\right)-\dfrac{1}{N}\text{tr}\left(\rho_N^{(n)}(t)\sum\limits_{i\neq j;i,j=1}^n V_{ij}^{[n]}B\right) \hskip.2in \left(\text{by \cite[equation (2.11)]{Attal}}\right)\\
&&= \text{tr}\left(\left[\dfrac{1}{N}\sum\limits_{i \neq j =1}^n V_{ij}^{[n]},\rho_N^{(n)}(t)\right]B\right).
\end{eqnarray*}

There are $N-n$ values of $j$ in line (\ref{Vsij}) and by symmetry of $\rho_N(t)$ we can replace all of these values of $j$ by $n+1$ and thus we have that line (\ref{Vsij}) can be rewritten as
\begin{eqnarray*}
&&\dfrac{N-n}{N} \text{tr}\left(B\otimes 1^{\otimes (N-n)} \,\, \sum\limits_{i=1}^n V_{i\,n+1}^{[N]}\rho_N(t)\right)-\dfrac{N-n}{N} \text{tr}\left(\sum\limits_{i=1}^n V_{i\,n+1}^{[N]}\,\,B\otimes 1^{\otimes (N-n)} \,\, \rho_N(t)\right).
\end{eqnarray*} Notice that $V_{i\, n+1}^{[N]}=V_{i\, n+1}^{[n+1]}\otimes 1^{\otimes (N-(n+1))}$ for $i\leq n$, thus the last displayed equation is equal to
\begin{eqnarray*}
&&\dfrac{N-n}{N} \text{tr}\left(\left(B\otimes 1\,\,\sum\limits_{i=1}^n V_{i\,n+1}^{[n+1]}\right)\otimes 1^{\otimes (N-n-1)} \,\, \rho_N(t)\right)\\
&&-\dfrac{N-n}{N} \text{tr}\left(\left(\sum\limits_{i=1}^n V_{i\,n+1}^{[n+1]}\,\,B\otimes 1\right)\,\,\otimes 1^{\otimes (N-n-1)} \,\, \rho_N(t)\right)
\end{eqnarray*} and therefore by \cite[equation (2.11)]{Attal} the last displayed expression is equal to
\begin{eqnarray*}
&&\dfrac{N-n}{N} \text{tr}\left(B\otimes 1 \,\, \sum\limits_{i=1}^n V_{i\,n+1}^{[n+1]} \,\, \rho_N^{(n+1)}(t)\right)-\dfrac{N-n}{N} \text{tr}\left(\sum\limits_{i=1}^n V_{i\,n+1}^{(n+1)}\,\, B\otimes 1 \,\, \rho_N^{(n+1)}(t)\right)\\
&&=\dfrac{N-n}{N}\sum\limits_{j=1}^n \text{tr}\left([V_{j\,n+1}^{[n+1]},\rho_{N}^{(n+1)}(t)]B\otimes 1\right)=\dfrac{N-n}{N}\sum\limits_{j=1}^n \text{tr}\left(\text{tr}_{\{n+1\}}[V_{j\,n+1}^{[n+1]},\rho_{N}^{(n+1)}(t)]B\right),
\end{eqnarray*} where again we used \cite[equation (2.11)]{Attal} to obtain the last equality.

Similarly, line (\ref{Vsji}) can be rewritten as
\begin{eqnarray*}
\dfrac{N-n}{N}\sum\limits_{j=1}^n \text{tr}\left(\text{tr}_{\{n+1\}}[V_{n+1\,j}^{[n+1]},\rho_{N}^{(n+1)}(t)]B\right).
\end{eqnarray*}

Thus, (\ref{DiffEq1}) and (\ref{claim1}) lead to the result
\begin{eqnarray*}
i\dfrac{d}{dt}\rho_N^{(n)}(t)=[H_{n,N},\rho_N^{(n)}(t)]+\dfrac{N-n}{N}\sum\limits_{j=1}^n \text{tr}_{\{n+1\}}[V_{j \,n+1}^{[n+1]}+V_{n+1\, j}^{[n+1]},\rho_{N}^{(n+1)}(t)].
\end{eqnarray*}
\end{proof}

The next proposition concludes with a family of differential equations which is satisfied by $\left(\rho(t)^{\otimes n}\right)_{n=1}^\infty$. This family of differential equations is similar to the ones displayed in equation (\ref{rhoNDiffEq}).

\begin{Prop}\label{Hartree2}
Let $\rho(0) \in \mathcal{D}(\mathbb{H})$. If $\rho(t)$ is the solution to the differential equation
\begin{eqnarray}
i\dfrac{d}{dt}\rho(t)=[A,\rho(t)]+\text{tr}_{\{2\}}\left[V_{12}^{[2]}+V_{21}^{[2]},\rho(t)\otimes\rho(t)\right]\label{HartreeEq}
\end{eqnarray} for $t\geq 0$ (which is called the Hartree Equation), with initial condition $\rho(0)$, (see Remark~\ref{BoveDaPratoFanoConditions}),
then we have that $\left(\rho(t)^{\otimes n}\right)_{n=1}^\infty$ satisfies the family of differential equations
\begin{eqnarray}\label{rhoDiffEq}
i\dfrac{d}{dt}\rho(t)^{\otimes n} = \sum\limits_{j=1}^n\left[A_j^{[n]},\rho(t)^{\otimes n}\right] + \sum\limits_{j=1}^n \text{tr}_{\{n+1\}}\left[V_{j \, n+1}^{[n+1]}+V_{n+1 \, j}^{[n+1]}, \rho(t)^{\otimes (n+1)}\right].
\end{eqnarray}
\end{Prop}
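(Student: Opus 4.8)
The plan is to differentiate $\rho(t)^{\otimes n}$ directly and substitute the Hartree equation (\ref{HartreeEq}) into the result. Since $\rho(t)$ is a differentiable $\cB(\mathbb{H})$-valued function and the factors are bounded, the Leibniz rule for tensor products gives
\begin{eqnarray*}
i\frac{d}{dt}\rho(t)^{\otimes n}=\sum_{j=1}^n \rho(t)^{\otimes(j-1)}\otimes\left(i\frac{d}{dt}\rho(t)\right)\otimes\rho(t)^{\otimes(n-j)}.
\end{eqnarray*}
Substituting (\ref{HartreeEq}) for $i\frac{d}{dt}\rho(t)$ splits the right-hand side into a drift sum coming from $[A,\rho(t)]$ and an interaction sum coming from $W:=\text{tr}_{\{2\}}[V_{12}^{[2]}+V_{21}^{[2]},\rho(t)\otimes\rho(t)]$, which is an operator on $\mathbb{H}$. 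I would then match each sum against the corresponding sum in (\ref{rhoDiffEq}).

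For the drift term, a direct computation shows that for each $j$,
\begin{eqnarray*}
[A_j^{[n]},\rho(t)^{\otimes n}]=\rho(t)^{\otimes(j-1)}\otimes[A,\rho(t)]\otimes\rho(t)^{\otimes(n-j)},
\end{eqnarray*}
since $A_j^{[n]}$ acts as $A$ only in the $j$-th slot and as the identity elsewhere, while $\rho(t)^{\otimes n}$ carries a $\rho(t)$ in every slot, so the commutator localizes in the $j$-th slot. Summing over $j$ identifies the drift sum with $\sum_{j=1}^n[A_j^{[n]},\rho(t)^{\otimes n}]$, the first sum in (\ref{rhoDiffEq}).

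The interaction term is the crux and the expected main obstacle. Here I must show, for each $j\in\{1,\dots,n\}$,
\begin{eqnarray*}
\text{tr}_{\{n+1\}}\left[V_{j\,n+1}^{[n+1]}+V_{n+1\,j}^{[n+1]},\rho(t)^{\otimes(n+1)}\right]=\rho(t)^{\otimes(j-1)}\otimes W\otimes\rho(t)^{\otimes(n-j)}.
\end{eqnarray*}
The idea is that $V_{j\,n+1}^{[n+1]}$ acts nontrivially only on the slots $j$ and $n+1$, so conjugating by the permutation unitary that moves these two slots to positions $1$ and $2$ turns $V_{j\,n+1}^{[n+1]}$ into $V_{12}^{[2]}\otimes 1^{\otimes(n-1)}$ while leaving $\rho(t)^{\otimes(n+1)}$ invariant, the factors being identical. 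The commutator then localizes as $[V_{12}^{[2]},\rho(t)\otimes\rho(t)]\otimes\rho(t)^{\otimes(n-1)}$, and tracing out the slot that was $n+1$ produces $\text{tr}_{\{2\}}[V_{12}^{[2]},\rho(t)\otimes\rho(t)]\otimes\rho(t)^{\otimes(n-1)}$; undoing the permutation places this in the $j$-th slot with $\rho(t)$'s in the remaining slots. The term $V_{n+1\,j}^{[n+1]}$ is handled identically and yields the $V_{21}^{[2]}$ contribution to $W$, the two legs of $V$ being interchanged under the reordering. The delicate points requiring care are tracking the order of the two legs of $V$ so that $V_{12}^{[2]}$ and $V_{21}^{[2]}$ appear correctly, and justifying that the partial trace over a single slot commutes with the reordering, for which I would use the characterization of the partial trace via $\text{tr}(\text{tr}_{\{n+1\}}(X)B)=\text{tr}(X\,B\otimes 1)$ (\cite[equation~(2.11)]{Attal}) already exploited in Proposition~\ref{Hartree1}. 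Summing over $j$ then matches the interaction sum with the second sum in (\ref{rhoDiffEq}), completing the proof.
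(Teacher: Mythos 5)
Your proposal is correct, and its skeleton --- the Leibniz rule for tensor products, substitution of the Hartree equation, and term-by-term identification of the drift and interaction sums --- is exactly the paper's; the drift computation is identical. The two arguments part ways only at the crux you identified, the interaction term, and there the mechanics genuinely differ. You start from the target term $\text{tr}_{\{n+1\}}\left[V_{j\,n+1}^{[n+1]}+V_{n+1\,j}^{[n+1]},\rho(t)^{\otimes(n+1)}\right]$ and conjugate by the permutation unitary carrying slots $(j,n+1)$ to $(1,2)$, using the invariance of $\rho(t)^{\otimes(n+1)}$ under $U_\pi^{[n+1]}$ to localize the commutator to the $2$-body picture and then tracing out; this requires the compatibility of the single-slot partial trace with permutation conjugation (true, and provable by the Attal duality you cite), and your leg-tracking --- $V_{n+1\,j}^{[n+1]}$ producing the $V_{21}^{[2]}$ contribution --- is right. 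The paper argues in the opposite direction: it recognizes the product-rule output $\rho(t)^{\otimes(j-1)}\otimes\text{tr}_{\{2\}}\left[V_{12}^{[2]}+V_{21}^{[2]},\rho(t)\otimes\rho(t)\right]\otimes\rho(t)^{\otimes(n-j)}$ as $\text{tr}_{\{j+1\}}\left[V_{j\,j+1}^{[n+1]}+V_{j+1\,j}^{[n+1]},\rho(t)^{\otimes(n+1)}\right]$, i.e.\ it inserts the auxiliary slot \emph{adjacent} to slot $j$ so that no conjugation of $V$ is ever needed, and then proves (its equations (\ref{eq1}) and (\ref{eq2})) that the traced slot can be moved from position $j+1$ to position $n+1$ by testing against product observables $B_1\otimes\cdots\otimes B_n$ via \cite[equation (2.11)]{Attal} and the symmetry of $\rho(t)^{\otimes(n+1)}$. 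Both routes rest on the same two ingredients, permutation symmetry of $\rho(t)^{\otimes(n+1)}$ and the duality characterization of the partial trace; yours relocates the operator and is conceptually more transparent (everything reduces to the $2$-body identity), while the paper's relocates only the traced slot, keeping every step a scalar trace identity and avoiding a separate lemma on partial traces of conjugated operators.
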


\begin{proof}
We have
\begin{eqnarray}
\nonumber&&i\dfrac{d}{dt}\rho(t)^{\otimes n}=\sum\limits_{j=1}^n \rho(t)^{\otimes (j-1)}\otimes i\dfrac{d}{dt}\rho(t)\otimes \rho(t)^{\otimes (n-j)} \hskip.2in (\text{``product'' rule})\\
\nonumber&=&\sum\limits_{j=1}^n \rho(t)^{\otimes (j-1)}\otimes  \left([A,\rho(t)]+\text{tr}_{\{2\}}\left[V_{12}^{[2]}+V_{21}^{[2]},\rho(t)\otimes\rho(t)\right]\right)\otimes \rho(t)^{\otimes (n-j)}
\end{eqnarray} by assumption. The last expression splits into the following two parts
\begin{eqnarray}
\label{CommA}&&\sum\limits_{j=1}^n \rho(t)^{\otimes (j-1)}\otimes [A,\rho(t)]\otimes \rho(t)^{\otimes (n-j)}\\
\label{CommV}&+&\sum\limits_{j=1}^n \rho(t)^{\otimes (j-1)}\otimes  \text{tr}_{\{2\}}\left[V_{12}^{[2]}+V_{21}^{[2]},\rho(t)\otimes\rho(t)\right]\otimes \rho(t)^{\otimes (n-j)}
\end{eqnarray}

Line (\ref{CommA}) can be rewritten as
\begin{eqnarray}
\nonumber&&\left[A,\rho(t)\right]\otimes \rho(t)^{\otimes(n-1)}+\rho(t)\otimes\left[A,\rho(t)\right]\otimes\rho(t)^{\otimes(n-1)}+\cdots + \rho(t)^{\otimes (n-1)}\otimes \left[A,\rho(t)\right]\\
\label{part1}&=& \sum\limits_{j=1}^n\left[A_j^{[n]},\rho(t)^{\otimes n}\right].
\end{eqnarray}

We claim that for $j\leq n$,
\begin{eqnarray}
\label{eq1} \text{tr}_{\{j+1\}}\left(\left(V_{j\,j+1}^{[n+1]}+V_{j+1\,j}^{[n+1]}\right)\rho(t)^{\otimes (n+1)}\right)=\text{tr}_{\{n+1\}}\left(\left(V_{j\,n+1}^{[n+1]}+V_{n+1\,j}^{[n+1]}\right)\rho(t)^{\otimes (n+1)}\right).
\end{eqnarray} Indeed, by \cite[equation (2.11)]{Attal}, for any $B_1,...,B_n \in \cB(\mathbb{H})$, we have by the symmetry of $\rho(t)^{\otimes (n+1)}$,
\begin{eqnarray*}
&&\text{tr}\left(\text{tr}_{\{j+1\}}\left(\left(V_{j\,j+1}^{[n+1]}+V_{j+1\,j}^{[n+1]}\right)\rho(t)^{\otimes (n+1)}\right)B_1\otimes \cdots \otimes B_n\right)\\
&=&\text{tr}\left(\left(V_{j\,j+1}^{[n+1]}+V_{j+1\, j}^{[n+1]}\right)\rho(t)^{\otimes (n+1)}B_1\otimes \cdots \otimes B_j \otimes 1 \otimes B_{j+1} \otimes \cdots \otimes B_n\right)\\
&=&\text{tr}\left(B_1\otimes \cdots \otimes B_j \otimes 1 \otimes B_{j+1} \otimes \cdots\otimes B_n\left(V_{j\,j+1}^{[n+1]}+V_{j+1\, j}^{[n+1]}\right)\rho(t)^{\otimes (n+1)}\right)\\
&=&\text{tr}\left(B_1\otimes \cdots \otimes B_n \otimes 1\left(V_{j\,n+1}^{[n+1]}+V_{n+1\, j}^{[n+1]}\right)\rho(t)^{\otimes (n+1)}\right)\\
&=&\text{tr}\left(\text{tr}_{\{n+1\}}\left(\left(V_{j\,n+1}^{[n+1]}+V_{n+1\,j}^{[n+1]}\right)\rho(t)^{\otimes (n+1)}\right)B_1\otimes \cdots \otimes B_n\right)
\end{eqnarray*} where for the second to last equality we used the symmetry of $\rho(t)^{\otimes (n+1)}$ to move each $B_k$ (for $k\geq j+1$) to the $k$th spot and $1$ to the $n+1$st spot.

Similar to equation (\ref{eq1}), we have that for $j \leq n$,
\begin{eqnarray}
\label{eq2}\text{tr}_{\{j+1\}}\left(\rho(t)^{\otimes (n+1)}\left(V_{j\,j+1}^{[n+1]}+V_{j+1\, j}^{[n+1]}\right)\right)=\text{tr}_{\{n+1\}}\left(\rho(t)^{\otimes (n+1)}\left(V_{j\,n+1}^{[n+1]}+V_{n+1\,j}^{[n+1]}\right)\right).
\end{eqnarray}
Line (\ref{CommV}) can be rewritten as
\begin{eqnarray}
\nonumber&&\sum\limits_{j=1}^n \rho(t)^{\otimes (j-1)}\otimes  \text{tr}_{\{2\}}\left(\left(V_{12}^{[2]}+V_{21}^{[2]}\right)\rho(t)\otimes\rho(t)\right)\otimes \rho(t)^{\otimes (n-j)}\\
\nonumber&-&\sum\limits_{j=1}^n \rho(t)^{\otimes (j-1)}\otimes  \text{tr}_{\{2\}}\left(\rho(t)\otimes\rho(t)\left(V_{12}^{[2]}+V_{21}^{[2]}\right)\right)\otimes \rho(t)^{\otimes (n-j)}\\
\label{eq3}&=& \sum\limits_{j=1}^n \text{tr}_{\{j+1\}}\left(\left(V_{j\,j+1}^{[n+1]}+V_{j+1\,j}^{[n+1]}\right)\rho(t)^{\otimes (n+1)}-\rho(t)^{\otimes (n+1)}\left(V_{j\,j+1}^{[n+1]}+V_{j+1\,j}^{[n+1]}\right)\right)
\end{eqnarray}. 

By equations (\ref{eq1}) and (\ref{eq2}), line (\ref{eq3}) is equal to
\begin{eqnarray}
\nonumber&&\sum\limits_{j=1}^n\left(\text{tr}_{\{n+1\}}\left(\left(V_{j\,n+1}^{[n+1]}+V_{n+1\,j}^{[n+1]}\right)\rho(t)^{\otimes (n+1)}\right)\right.\\
\nonumber&-&\left.\text{tr}_{\{n+1\}}\left(\rho(t)^{\otimes (n+1)}\left(V_{j\,n+1}^{[n+1]}+V_{n+1\,j}^{[n+1]}\right)\right)\right)\\
\label{part2}&=&\sum\limits_{j=1}^n \text{tr}_{\{n+1\}}\left[V_{j \, n+1}^{[n+1]}+V_{n+1 \, j}^{[n+1]}, \rho(t)^{\otimes (n+1)}\right].
\end{eqnarray}
Of course (\ref{part1}) and (\ref{part2}) complete the proof.
\end{proof}

Equation (\ref{HartreeEq}) has a unique solution. This solution $\rho(t)$ is self-adjoint trace class for each $t\geq 0$, see \cite[Theorem 4.1]{BoveDaPratoFano}. We will see in the proof of Theorem~\ref{PropResult} that $\rho(t)$ is a density operator for each $t\geq 0$. The next remark checks that equation~(\ref{HartreeEq}) satisfies the conditions of \cite[Theorem 4.1]{BoveDaPratoFano}.

\begin{Rmk}\label{BoveDaPratoFanoConditions}
Let $X$ be the real Banach space of self-adjoint trace class operators on $\mathbb{H}$. Define the mapping $f:X\rightarrow X$ by $f(T)=-i\text{tr}_{\{2\}}\left[V_{12}^{[2]}+V_{21}^{[2]},T\otimes T\right]$. 

We will first prove that $f$ is locally Lipschitzian. Let $r\geq 0$, and $T,S\in X$ with $\left|\left|T\right|\right|_1 \leq r$ and $\left|\left|S\right|\right|_1 \leq r$. Then
\begin{eqnarray*}
\left|\left|f(T)-f(S)\right|\right|_1 &=& \left|\left|-i\text{tr}_{\{2\}}\left[V_{12}^{[2]}+V_{21}^{[2]},T\otimes T\right]+i\text{tr}_{\{2\}}\left[V_{12}^{[2]}+V_{21}^{[2]},S\otimes S\right]\right|\right|_1\\
&=&\left|\left|\text{tr}_{\{2\}}\left[V_{12}^{[2]}+V_{21}^{[2]},T\otimes T - S\otimes S\right]\right|\right|_1\\
&\leq& \left|\left|\left[V_{12}^{[2]}+V_{21}^{[2]},T\otimes T - S\otimes S\right]\right|\right|_1 \leq 4\left|\left|V\right|\right|_\infty \left|\left|T\otimes T - S\otimes S\right|\right|_1\\
&=& 4\left|\left|V\right|\right|_\infty \left|\left|T\otimes T - T\otimes S + T\otimes S - S\otimes S\right|\right|_1\\
&\leq& 4\left|\left|V\right|\right|_\infty \left(\left|\left|T\otimes\left(T-S\right)\right|\right|_1+\left|\left|\left(T-S\right)\otimes S\right|\right|_1\right)\\
&\leq& 4r\left|\left|V\right|\right|_\infty \left|\left|T-S\right|\right|_1
\end{eqnarray*} which shows that $f$ is locally Lipschitzian.

We will now prove that $f$ satisfies $||T||_1\leq ||T-\alpha f(T)||_1$ for all $\alpha\geq 0$ and $T\in X$ \cite[inequality (4.1)]{BoveDaPratoFano}. Let $\alpha \geq 0$, $T\in X$, and $S=T-\alpha f(T)$. We have that $T=\sum\limits_{i=1}^\infty \lambda_i \left.\left|x_i\right>\right.\left.\left<x_i\right|\right.$ for some orthonormal sequence $\left(x_i\right)_{i=1}^\infty \subset \mathbb{H}$ and some sequence $\left(\lambda_i\right)_{i=1}^\infty \subset \mathbb{R}$. Define $\sigma = \sum\limits_{i=1}^\infty \text{sign}(\lambda_i)\left.\left|x_i\right>\right.\left.\left<x_i\right|\right.$. Thus, $\left|T\right|=T\sigma = \sigma T$, and
\begin{eqnarray*}
\text{tr}\left(f(T)\sigma\right)&=&-i\text{tr}\left(\text{tr}_{\{2\}}\left[V_{12}^{[2]}+V_{21}^{[2]},T\otimes T\right]\sigma\right)=-i\text{tr}\left(\left[V_{12}^{[2]}+V_{21}^{[2]},T\otimes T\right]\sigma\otimes 1\right)\\
&=&-i\text{tr}\left(\left(V_{12}^{[2]}+V_{21}^{[2]}\right)\left(T\otimes T\right)\left(\sigma\otimes 1\right) - \left(T\otimes T\right)\left(V_{12}^{[2]}+V_{21}^{[2]}\right)\left(\sigma\otimes 1\right)\right)\\
&=& -i\text{tr}\left(\left(V_{12}^{[2]}+V_{21}^{[2]}\right)\left|T\right|\otimes T - \left|T\right|\otimes T \left(V_{12}^{[2]}+V_{21}^{[2]}\right)\right)\\
&=& -i\text{tr}\left(\left[V_{12}^{[2]}+V_{21}^{[2]},\left|T\right|\otimes T\right]\right)=0.
\end{eqnarray*} By the cyclicity of the trace, we also have $\text{tr}\left(\sigma f(T)\right)=0$. Hence, we have
\begin{eqnarray*}
\left|\left|T\right|\right|_1 &=& \dfrac{1}{2}\text{tr}\left(T\sigma + \sigma T\right) = \dfrac{1}{2}\text{tr}\left(\left(S+\alpha f(T)\right)\sigma + \sigma\left(S+\alpha f(T)\right)\right)\\
&=& \dfrac{1}{2}\text{tr}\left(S\sigma + \sigma S\right)+\dfrac{\alpha}{2}\text{tr}\left(f(T)\sigma + \sigma f(T)\right) = \dfrac{1}{2}\text{tr}\left(S\sigma + \sigma S\right)\\
&\leq& \dfrac{1}{2}\text{tr}\left|S\sigma + \sigma S\right| \leq \dfrac{1}{2}\left(\left|\left|S\right|\right|_1 \left|\left|\sigma\right|\right|_\infty + \left|\left|\sigma\right|\right|_\infty \left|\left|S\right|\right|_1\right) = \left|\left|\sigma\right|\right|_\infty \left|\left|S\right|\right|_1\\
&=& \left|\left|S\right|\right|_1 = \left|\left|T-\alpha f(T)\right|\right|_1.
\end{eqnarray*}

Finally, it is clear that $M(\cdot)=-i\left[A,\cdot\right]$ defines the infinitesimal generator of the contraction semigroup $T_t(\cdot)=e^{-itA}\left(\cdot\right)e^{itA}$ in $X$. Thus, all of the conditions of \cite[Theorem 4.1]{BoveDaPratoFano} are satisfied.
\end{Rmk}

The similarity of the two equations (\ref{rhoNDiffEq}) and (\ref{rhoDiffEq}) helps to prove the propagation of chaos presented in the following theorem. The idea of the proof of this theorem comes from Ducomet~\cite[Theorem 3.1]{Ducomet}, and Bardos, Golse, Gottlieb, and Mauser~\cite[Theorem 3.1]{Bardos}.

\begin{Thm}\label{PropResult}
Let a sequence $(\rho_N(0))_{N=1}^\infty$ of density operators be $\rho(0)$-chaotic where $\rho(0) \in \mathcal{D}(\mathbb{H})$. Let $(H_N)_{N=1}^\infty$ be a sequence of Hamiltonians defined by equation (\ref{HamiltonianSeq}). Then, for each fixed $t \geq 0$, the sequence of density operators $(\rho_N(t))_{N=1}^\infty$ defined in equation (\ref{NewDensity}) is $\rho(t)$-chaotic where $\rho(t)$ is the solution of the Hartree equation (equation (\ref{HartreeEq})) with initial condition $\rho(0)$. Thus chaos propagates with respect to the Hamiltonians $(H_N)_{N=1}^\infty$.
\end{Thm}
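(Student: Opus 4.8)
The plan is to prove the trace-norm convergence $\|\rho_N^{(n)}(t)-\rho(t)^{\otimes n}\|_1\to 0$ for each fixed $n$ and $t$, since by Proposition~\ref{EquivDefChaos} this is precisely what $n$-$\rho(t)$-chaoticity means at each level (and by the corollary at the end of Section~2 it would in fact suffice to settle the level $n=2$). Write $D_N^{(n)}(t):=\rho_N^{(n)}(t)-\rho(t)^{\otimes n}$, where $\rho(t)$ is the unique self-adjoint trace-class solution of the Hartree equation~(\ref{HartreeEq}) supplied by Remark~\ref{BoveDaPratoFanoConditions}. Subtracting the two hierarchies of Propositions~\ref{Hartree1} and~\ref{Hartree2}, and using $H_{n,N}=\sum_{j=1}^n A_j^{[n]}+\frac1N\sum_{i\neq j}V_{ij}^{[n]}$, I would obtain a single differential equation for $D_N^{(n)}(t)$ whose right-hand side splits into three pieces: the free part $\sum_{j=1}^n[A_j^{[n]},D_N^{(n)}(t)]$; the genuine coupling to the next level $\frac{N-n}{N}\sum_{j=1}^n\text{tr}_{\{n+1\}}[V_{j\,n+1}^{[n+1]}+V_{n+1\,j}^{[n+1]},D_N^{(n+1)}(t)]$; and two forcing terms of order $n^2/N$, namely $\frac1N\sum_{i\neq j}[V_{ij}^{[n]},\rho_N^{(n)}(t)]$ together with the mismatch $-\frac nN\sum_{j=1}^n\text{tr}_{\{n+1\}}[V_{j\,n+1}^{[n+1]}+V_{n+1\,j}^{[n+1]},\rho(t)^{\otimes(n+1)}]$ produced by the factor $\frac{N-n}{N}-1$.

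Next I would pass to the interaction picture with respect to the free generator $D\mapsto-i\sum_{j=1}^n[A_j^{[n]},D]$: since this is conjugation by the unitary $e^{-it\sum_j A_j^{[n]}}$, it acts isometrically on the trace class, so Duhamel's formula and the triangle inequality remove the free term and yield, writing $e_n(t):=\|D_N^{(n)}(t)\|_1$,
\begin{equation*}
e_n(t)\ \le\ e_n(0)\ +\ 4n\,\|V\|_\infty\int_0^t e_{n+1}(s)\,ds\ +\ \frac{C_n}{N}\,t,
\end{equation*}
valid for $n+1\le N$. Here I use that the partial trace is a trace-norm contraction, that $\|V_{j\,n+1}^{[n+1]}+V_{n+1\,j}^{[n+1]}\|_\infty\le 2\|V\|_\infty$, and the elementary bound $\|[X,Y]\|_1\le2\|X\|_\infty\|Y\|_1$; the constant $C_n$, of order $n^2\|V\|_\infty$, absorbs the two forcing terms, each estimated via $\|\rho_N^{(n)}(s)\|_1=1$ and $\sup_{s\le t}\|\rho(s)^{\otimes(n+1)}\|_1<\infty$.

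To close this BBGKY-type inequality I would iterate it to some depth $K$, substituting the bound for $e_{n+1}$ into that for $e_n$, and so on, controlling the highest level trivially by $e_{n+K}\le 2$. For fixed $K$ and $N\to\infty$ the initial-data terms $e_{n+\ell}(0)=\|\rho_N^{(n+\ell)}(0)-\rho(0)^{\otimes(n+\ell)}\|_1$ vanish, because $(\rho_N(0))_{N}$ is $\rho(0)$-chaotic (Proposition~\ref{EquivDefChaos}), and the $1/N$ forcing terms vanish too, leaving only a remainder bounded by $2\binom{n+K-1}{K}(4\|V\|_\infty t)^K$ coming from the $K$-fold time-ordered integral. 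The main obstacle is exactly this combinatorial remainder: because the coupling coefficient grows like $n$, it tends to $0$ as $K\to\infty$ only when $4\|V\|_\infty t<1$, so the estimate closes only on a short interval $[0,T_*]$ with $T_*$ depending solely on $\|V\|_\infty$. I would remove this restriction by a continuation argument: having shown $(\rho_N(s))_{N}$ is $\rho(s)$-chaotic for $s\in[0,T_*]$, I restart the whole scheme at $s_0=T_*/2$, using that $(\rho_N(s_0))_{N}$ is $\rho(s_0)$-chaotic as the new (vanishing) initial data and that, by uniqueness, the solution of~(\ref{HartreeEq}) started at $s_0$ is $\rho(s_0+\cdot)$; since the time step $T_*$ is uniform, finitely many restarts reach any prescribed $t\ge 0$.

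Finally, the assertion that $\rho(t)\in\mathcal{D}(\mathbb{H})$, and not merely self-adjoint trace class, falls out of the same argument: the case $n=1$ gives $\|\rho_N^{(1)}(t)-\rho(t)\|_1\to0$, and since each $\rho_N^{(1)}(t)$ is a density operator and both positivity and unit trace are preserved under trace-norm limits, $\rho(t)$ is a density operator. This in turn yields $\|\rho(t)\|_1=1$, which keeps the a priori bounds and the continuation scheme consistent at every restart and completes the proof that chaos propagates with respect to $(H_N)_{N=1}^\infty$.
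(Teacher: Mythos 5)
Your proposal follows essentially the same route as the paper: subtract the hierarchy of Proposition~\ref{Hartree1} from the tensorized Hartree hierarchy of Proposition~\ref{Hartree2}, conjugate by $e^{it\sum_j A_j^{[n]}}$ to remove the free term isometrically on the trace class, integrate, iterate the resulting inequality $e_n(t)\le e_n(0)+4n\|V\|_\infty\int_0^t e_{n+1}(s)\,ds+C_n t/N$ to a finite depth, and close on a short time interval followed by a continuation. The decomposition, the coefficient $4n\|V\|_\infty$, the combinatorial factor $\binom{n+K-1}{K}$, and the restart scheme all agree with the paper's proof. There is, however, one genuine gap.

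You terminate the iteration with the bound $e_{n+K}\le 2$, and from it deduce the remainder $2\binom{n+K-1}{K}(4\|V\|_\infty t)^K$ and a step size $T_*$ depending \emph{solely} on $\|V\|_\infty$. But $e_{n+K}(s)\le 1+\|\rho(s)\|_1^{n+K}$, so the bound by $2$ requires $\|\rho(s)\|_1\le 1$, and at this stage \cite[Theorem 4.1]{BoveDaPratoFano} provides only a self-adjoint trace-class solution: positivity and unit trace norm of $\rho(s)$ for $s>0$ are part of what is being proved (you obtain them only at the very end from $\|\rho_N^{(1)}(s)-\rho(s)\|_1\to 0$). The reasoning is therefore circular on the first interval $[0,T_*]$, and the same circle reappears at every restart, since the new window $[s_0,s_0+T_*]$ always extends past the region where chaoticity, hence $\|\rho\|_1=1$, has been established; your closing remark that the conclusion ``keeps the a priori bounds consistent at every restart'' does not break it. There is even an internal inconsistency: for the forcing terms you correctly allow $\sup_{s\le t}\|\rho(s)^{\otimes(n+1)}\|_1<\infty$ rather than $\le 1$, and that same constant must then enter the remainder. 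The paper's repair is precisely this bookkeeping: set $K_{T_0}:=\sup\{\|\rho(s)\|_1:s\in[T_0,T_0+1]\}$, finite by continuity of the solution, bound the top level by $1+K_{T_0}^{n+m+1}\le 2K_{T_0}^{n+m+1}$, and choose the step $\delta_{T_0}\approx 1/\bigl(8\max\{\|V\|_\infty,1\}K_{T_0}\bigr)$ so that $4\|V\|_\infty K_{T_0}\delta_{T_0}\le\tfrac12$; the step then depends on $K_{T_0}$ but is uniform on the unit interval $[T_0,T_0+1]$, so finitely many steps still cover it and the continuation goes through. Only a posteriori (the paper's final Remark) does one learn that $K_{T_0}=1$, which is the fact your argument assumed up front.
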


Bardos, Golse, and Mauser \cite[Theorem 5.4]{BardosGolseMauser} prove propagation of chaos for a similar sequence of Hamiltonians using a different approach. In the present paper, we consider density operators while the above authors present their results in terms of the equivalent density matrices. There are several differences between Theorem~\ref{PropResult} and Theorem $5.4$ of \cite{BardosGolseMauser}. For example, the sequence of Hamiltonians in Theorem~\ref{PropResult} is constructed using an arbitrary bounded self-adjoint operator $A$ on an arbitrary Hilbert space, whereas in \cite[Theorem 5.4]{BardosGolseMauser}, the authors consider $-\Delta$ on an appropriate Hilbert space. In addition, the sequence of Hamiltonians present in Theorem~\ref{PropResult} is constructed using an arbitrary bounded self-adjoint operator $V$, whereas in \cite[Theorem 5.4]{BardosGolseMauser}, the authors consider a bounded function $V$ on $\mathbb{R}_{+}$ which acts multiplicatively on the wave function, is bounded from below, and vanishes at infinity. Another paper that examines an evolution via a similar sequence of Hamiltonians using the language of wave functions and density matrices is \cite{ChenLeeLee}.

\begin{proof}
Fix $T_0\in\mathbb{N}\cup\{0\}$. Let $K_{T_0}:=\sup\left\{\left|\left|\rho(t)\right|\right|_1:t\in\left[T_0,T_0+1\right]\right\}<\infty$ since by \cite[Theorem 4.1]{BoveDaPratoFano} we know that the self-adjoint trace class solution $\rho(t)$ to the Hartree equation is continuous with respect to time. Let $\delta_{T_0}:=\dfrac{1}{\left[8\left(\max\{||V||_\infty,1\}\right)K_{T_0}\right]+1}$ where $\left[8\left(\max\{||V||_\infty,1\}\right)K_{T_0}\right]$ is the integer part of $8\left(\max\{||V||_\infty,1\}\right)K_{T_0}$.

In order to prove Theorem~\ref{PropResult} we will show the following: 

Fix $t_0:=T_0+k\delta_{T_0}$ for some $k\in\{0,1,...,\left[8\left(\max\{||V||_\infty,1\}\right)K_{T_0}\right]\}$. Assume that $\left(\rho_N(t_0)\right)_{N=1}^\infty$ is $\rho(t_0)$ chaotic where $\rho(t_0)\in\mathcal{D}(\mathbb{H})$. Then for $t\in[t_0,t_0+\delta_{T_0}]$, $\left(\rho_N(t)\right)_{N=1}^\infty$ is $\rho(t)$ chaotic where $\rho(t)\in\mathcal{D}(\mathbb{H})$ is the solution to the Hartree equation (equation~(\ref{HartreeEq})) with initial condition $\rho(t_0)$.

For $t \in [t_0,\infty)$, by Proposition~\ref{Hartree2}, for each $n \in \mathbb{N}$,
\begin{eqnarray}
\label{diff2} i\dfrac{d}{dt}\rho(t)^{\otimes n} = \sum\limits_{j=1}^n\left[A_j^{[n]},\rho(t)^{\otimes n}\right] + \sum\limits_{j=1}^n \text{tr}_{\{n+1\}}\left[V_{j \, n+1}^{[n+1]}+V_{n+1 \, j}^{[n+1]}, \rho(t)^{\otimes (n+1)}\right].
\end{eqnarray}

Also notice that for each $n,N \in \mathbb{N}$ with $n \leq N-1$, by Proposition~\ref{Hartree1}
\begin{eqnarray}
\nonumber i\dfrac{d}{dt}\rho_N^{(n)}(t)&=&[H_{n,N},\rho_N^{(n)}(t)]+\dfrac{N-n}{N}\sum\limits_{j=1}^n \text{tr}_{\{n+1\}}[V_{j \,n+1}^{[n+1]}+V_{n+1\, j}^{[n+1]},\rho_{N}^{(n+1)}(t)]\\
\nonumber&=& \sum\limits_{j=1}^n \left[A_j^{[n]},\rho_N^{(n)}(t)\right]+\sum\limits_{j=1}^n \text{tr}_{\{n+1\}}[V_{j \,n+1}^{[n+1]}+V_{n+1\, j}^{[n+1]},\rho_{N}^{(n+1)}(t)]\\
\nonumber&+&\dfrac{1}{N}\sum\limits_{i\neq j=1}^n \left[V_{i\,j}^{[n]},\rho_N^{(n)}(t)\right]-\dfrac{n}{N}\sum\limits_{j=1}^n \text{tr}_{\{n+1\}}[V_{j \,n+1}^{[n+1]}+V_{n+1\, j}^{[n+1]},\rho_{N}^{(n+1)}(t)]\\
\label{diff1}&=&\mathcal{L}_n(\rho_N^{(n)}(t))+\sum\limits_{j=1}^n \text{tr}_{\{n+1\}}[V_{j \,n+1}^{[n+1]}+V_{n+1\, j}^{[n+1]},\rho_{N}^{(n+1)}(t)]+\epsilon_n(t,N,\rho_N(t_0))
\end{eqnarray} where $\mathcal{L}_n(\cdot):=\sum\limits_{j=1}^n \left[A_j^{[n]},\cdot\right]$ and $$\epsilon_n(t,N,\rho_N(t_0)):=\dfrac{1}{N}\sum\limits_{i\neq j=1}^n \left[V_{i\,j}^{[n]},\rho_N^{(n)}(t)\right]-\dfrac{n}{N}\sum\limits_{j=1}^n \text{tr}_{\{n+1\}}[V_{j \,n+1}^{[n+1]}+V_{n+1\, j}^{[n+1]},\rho_{N}^{(n+1)}(t)].$$

Define $E_{n,N}(t):=\rho_N^{(n)}(t)-\rho(t)^{\otimes n}$ for each $n \leq N$. Then, by subtracting (\ref{diff2}) from (\ref{diff1}), we obtain that for each $n,N \in \mathbb{N}$ with $n \leq N-1$,
\begin{eqnarray*}
i\dfrac{d}{dt}E_{n,N}(t) = \sum\limits_{j=1}^n\left[A_j^{[n]},E_{n,N}(t)\right] + \sum\limits_{j=1}^n \text{tr}_{\{n+1\}}\left[V_{j \, n+1}^{[n+1]}+V_{n+1 \, j}^{[n+1]}, E_{n+1,N}(t)\right]+\epsilon_n(t,N,\rho_N(t_0)).
\end{eqnarray*}

The next step is to obtain an upper bound for the trace class norm of $E_{n,N}(t)$ which does not involve $||A||_\infty$. In order to do this, we define a new evolution $\mathcal{U}_{n,t}$ which will be evaluated at $E_{n,N}(t)$. 

We define $\mathcal{U}_{n,t}(\cdot):=e^{it\mathcal{L}_n}(\cdot)=e^{it\sum\limits_{j=1}^n A_j^{[n]}}(\cdot)e^{-it\sum\limits_{j=1}^n A_j^{[n]}}$. We claim that $\mathcal{U}_{n,t}$ is an isometry on the trace class operators on $\mathbb{H}^{\otimes n}$ for each $n \in \mathbb{N}$ and $t\in[0,\infty)$. Indeed, if $T \in \cB(\mathbb{H}^{\otimes n})$ is a trace class operator, then
\begin{eqnarray*}
||\mathcal{U}_{n,t}(T)||_1 = ||e^{it\sum\limits_{j=1}^n A_j^{[n]}}Te^{-it\sum\limits_{j=1}^n A_j^{[n]}}||_1 \leq ||e^{it\sum\limits_{j=1}^n A_j^{[n]}}||_\infty ||T||_1 ||e^{-it\sum\limits_{j=1}^n A_j^{[n]}}||_\infty = ||T||_1,
\end{eqnarray*} and similarly, by observing that $T=e^{-it\sum\limits_{j=1}^n A_j^{[n]}}\mathcal{U}_{n,t}(T)e^{it\sum\limits_{j=1}^n A_j^{[n]}}$, we get the reverse inequality.

Now define $Z_{n,N}(t):=\mathcal{U}_{n,t}(E_{n,N}(t))$. Then
\begin{eqnarray}
\nonumber&&\dfrac{d}{dt}Z_{n,N}(t)=i\sum\limits_{j=1}^n A_j^{[n]} \mathcal{U}_{n,t}(E_{n,N}(t))-i\mathcal{U}_{n,t}(E_{n,N}(t))\sum\limits_{j=1}^n A_j^{[n]}\\
\nonumber&-&i\mathcal{U}_{n,t}\left(\mathcal{L}_n(E_{n,N}(t))+\sum\limits_{j=1}^n\text{tr}_{\{n+1\}}\left[V_{j\, n+1}^{[n+1]}+V_{n+1\, j}^{[n+1]}, E_{n+1,N}(t)\right]+\epsilon_n(t,N,\rho_N(t_0))\right)\\
\label{diffZ}&=& -i\sum\limits_{j=1}^n \mathcal{U}_{n,t}\left(\text{tr}_{\{n+1\}}\left[V_{j\,n+1}^{[n+1]}+V_{n+1\,j}^{[n+1]}, E_{n+1,N}(t)\right]\right)-i\mathcal{U}_{n,t}\left(\epsilon_n(t,N,\rho_N(t_0))\right)
\end{eqnarray}
where the last equality follows because $\mathcal{U}_{n,t}$ and $\mathcal{L}_n$ commute hence $$i\sum\limits_{j=1}^n A_j^{[n]} \mathcal{U}_{n,t}(E_{n,N}(t))-i\mathcal{U}_{n,t}(E_{n,N}(t))\sum\limits_{j=1}^n A_j^{[n]}-i\mathcal{U}_{n,t}(\mathcal{L}_n(E_{n,N}(t)))=0.$$

Line~(\ref{diffZ}) is continuous from $[0,\infty)$ to the trace class operators on $\mathbb{H}^{\otimes n}$. Thus, the Fundamental Theorem of Calculus is valid \cite[Theorem 2.3]{Mikusinski}. By integrating both sides of equation (\ref{diffZ}), we obtain that, for each $n,N \in \mathbb{N}$ with $n \leq N-1$, 
\begin{eqnarray*}
Z_{n,N}(t)&=&Z_{n,N}(t_0)-i\sum\limits_{j=1}^n\displaystyle\int_{t_0}^t\mathcal{U}_{n,s}\left(\text{tr}_{\{n+1\}} \left[V_{j\,n+1}^{[n+1]}+V_{n+1\,j}^{[n+1]},E_{n+1,N}(s)\right]\right)ds\\
&-& i \displaystyle\int_{t_0}^t \mathcal{U}_{n,s}\left(\epsilon_n(s,N,\rho_N(t_0))\right)ds.
\end{eqnarray*}

We will aim to show that $\lim\limits_{N\rightarrow\infty}||E_{n,N}(t)||_1=0$. We have
\begin{eqnarray}
\nonumber&&||E_{n,N}(t)||_1 = ||Z_{n,N}(t)||_1\leq ||Z_{n,N}(t_0)||_1\\ 
\nonumber&+& \left|\left|\sum\limits_{j=1}^n\int_{t_0}^t\mathcal{U}_{n,s}\left(\text{tr}_{\{n+1\}}\left[V_{j\,n+1}^{[n+1]}+V_{n+1\,j}^{[n+1]},E_{n+1,N}(s)\right]\right)ds\right|\right|_1\\
\nonumber&+& \left|\left|\int_{t_0}^t \mathcal{U}_{n,s}\left(\epsilon_n(s,N,\rho_N(t_0))\right)\right|\right|_1\\
\label{Einq}&\leq& \left|\left|E_{n,N}(t_0)\right|\right|_1+(t-t_0)||\epsilon_n(s,N,\rho_N(t_0))||_1 + 4||V||_\infty \sum\limits_{j=1}^n\int_{t_0}^t \left|\left|E_{n+1,N}(s)\right|\right|_1 ds.
\end{eqnarray} We notice that for every $n,N \in \N$ with $n \leq N-1$ and $s \in [0,\infty)$,
\begin{eqnarray}
\nonumber&&||\epsilon_n(s,N,\rho_N(t_0))||_1\\
\nonumber&=& \left|\left|\dfrac{1}{N}\sum\limits_{i\neq j=1}^n \left[V_{i\,j}^{[n]},\rho_N^{(n)}(s)\right]-\dfrac{n}{N}\sum\limits_{j=1}^n \text{tr}_{\{n+1\}}[V_{j \,n+1}^{[n+1]}+V_{n+1\, j}^{[n+1]},\rho_{N}^{(n+1)}(s)]\right|\right|_1\\
\nonumber&\leq& \dfrac{1}{N}\left|\left|\sum\limits_{i\neq j;i,j=1}^n \left[V_{i\,j}^{[n]},\rho_N^{(n)}(s)\right]\right|\right|_1+\dfrac{n}{N}\left|\left|\sum\limits_{j=1}^n \text{tr}_{\{n+1\}}[V_{j \,n+1}^{[n+1]}+V_{n+1\, j}^{[n+1]},\rho_{N}^{(n+1)}(s)]\right|\right|_1\\
\label{einq}&\leq& \dfrac{n(n-1)}{N}||V||_\infty + \dfrac{4n^2}{N}||V||_\infty \leq \dfrac{5n^2}{N}||V||_\infty.
\end{eqnarray}

Inequalities (\ref{Einq}) and (\ref{einq}) give
\begin{eqnarray}\label{inequality1}
||E_{n,N}(t)||_1 \leq ||E_{n,N}(t_0)||_1 + \dfrac{5n^2}{N}||V||_\infty (t-t_0) + 4||V||_\infty \sum\limits_{j=1}^n \int_{t_0}^t ||E_{n+1,N}(s)||_1 ds.
\end{eqnarray}

Fixing $n \in \mathbb{N}$ and iterating this inequality $m$ more times for $m \in \mathbb{N}$ and $m \leq N-n-1$, we obtain
\begin{eqnarray}
\label{Trick2}&&\hskip.2in\left|\left|E_{n,N}(t)\right|\right|_1 \leq \left|\left|E_{n,N}(t_0)\right|\right|_1+\dfrac{5n^2}{N}||V||_\infty (t-t_0)\\
\nonumber&+&\sum\limits_{k=1}^m (4||V||_\infty)^{k}\left[\sum\limits_{j_1=1}^{n}\sum\limits_{j_2=1}^{n+1}\cdots \sum\limits_{j_k=1}^{n+k-1}\left(\dfrac{(t-t_0)^k}{k!}\left|\left|E_{n+k,N}(t_0)\right|\right|_1 + \dfrac{5(n+k)^2}{N}||V||_\infty\dfrac{(t-t_0)^{k+1}}{(k+1)!}\right)\right]\\
\nonumber&+& (4||V||_\infty)^{m+1}\sum\limits_{j_1=1}^{n}\sum\limits_{j_2=1}^{n+1}\cdots\sum\limits_{j_{m+1}=1}^{n+m} \displaystyle\int_{t_0}^t \displaystyle\int_{t_0}^{t_1} \cdots \displaystyle\int_{t_0}^{t_m}\left|\left|E_{n+m+1,N}(t_{m+1})\right|\right|_1dt_{m+1}dt_m \cdots dt_1.
\end{eqnarray}

Since $\rho_N^{(n+m+1)}(t)$ is a density operator, by the definition of $K_{T_0}$ we have that $$\left|\left|E_{n+m+1,N}(s)\right|\right|_1\leq 1+K_{T_0}^{n+m+1}\leq 2K_{T_0}^{n+m+1}$$ for $s\in[T_0,T_0+1]$ (since $K_{T_0}\geq\left|\left|\rho(T_0)\right|\right|_1=1$). Thus the last line of inequality (\ref{Trick2}) can be bounded above by
\begin{eqnarray*}
&&(4||V||_\infty)^{m+1}\sum\limits_{j_1=1}^{n}\sum\limits_{j_2=1}^{n+1}\cdots\sum\limits_{j_{m+1}=1}^{n+m} \displaystyle\int_{t_0}^t \displaystyle\int_{t_0}^{t_1} \cdots \displaystyle\int_{t_0}^{t_m}2K_{T_0}^{n+m+1} dt_{m+1}dt_m \cdots dt_1\\
&=& 2K_{T_0}^{n+m+1}(4||V||_\infty)^{m+1} n(n+1)\cdots (n+m) \dfrac{(t-t_0)^{m+1}}{(m+1)!}\\
&=& 2K_{T_0}^n\dfrac{n(n+1)\cdots(n+m)}{(m+1)!}(4||V||_\infty K_{T_0} (t-t_0))^{m+1}\\
&=& 2K_{T_0}^n{n+m\choose n-1}(4||V||_\infty K_{T_0} (t-t_0))^{m+1} \leq \dfrac{2K_{T_0}^n}{n!}(n+m)^n(4||V||_\infty K_{T_0} (t-t_0))^{m+1}
\end{eqnarray*} where we used that
\begin{eqnarray*}
{n+m \choose n-1} = \dfrac{(n+m)!}{(n-1)!(m+1)!} \leq \dfrac{(n+m)^{n-1}}{(n-1)!}=\dfrac{n(n+m)^{n-1}}{n!}\leq \dfrac{(n+m)^n}{n!}.
\end{eqnarray*}

Thus, by (\ref{Trick2}), we obtain
\begin{eqnarray*}
\nonumber &&\left|\left|E_{n,N}(t)\right|\right|_1 \leq \left|\left|E_{n,N}(t_0)\right|\right|_1+\dfrac{5n^2}{N}||V||_\infty (t-t_0)\\
\nonumber&+&\sum\limits_{k=1}^m (4||V||_\infty)^{k}\left[\sum\limits_{j_1=1}^{n}\sum\limits_{j_2=1}^{n+1}\cdots \sum\limits_{j_k=1}^{n+k-1}\left(\dfrac{(t-t_0)^k}{k!}\left|\left|E_{n+k,N}(t_0)\right|\right|_1 + \dfrac{5(n+k)^2}{N}||V||_\infty\dfrac{(t-t_0)^{k+1}}{(k+1)!}\right)\right]\\\nonumber
&+& \dfrac{2K_{T_0}^n}{n!}(n+m)^n(4||V||_\infty K_{T_0} (t-t_0))^{m+1}.
\end{eqnarray*}

Let $\epsilon>0$. Fix $t\in \left[t_0,t_0+\delta_{T_0}\right]$. Choose $m$ such that $$\dfrac{2K_{T_0}^n}{n!}(n+m)^n(4||V||_\infty K_{T_0} (t-t_0))^{m+1}\leq\dfrac{2K_{T_0}^n}{n!}(n+m)^n\left(\dfrac{1}{2}\right)^{m+1}<\dfrac{\epsilon}{3}$$ where the first inequality is valid by the choice of $\delta_{T_0}$. Then since $\lim\limits_{N\rightarrow\infty}\left|\left|E_{n,N}(t_0)\right|\right|_1=0$ by Proposition~\ref{EquivDefChaos}, we can choose $N_1 \in \N$ large enough such that
$$\left|\left|E_{n,N}(t_0)\right|\right|_1 + \sum\limits_{k=1}^m (4||V||_\infty)^k \sum\limits_{j_1=1}^n\sum\limits_{j_2=1}^{n+1}\cdots\sum\limits_{j_k=1}^{n+k-1} \dfrac{(t-t_0)^k}{k!}||E_{n+k,N}(t_0)||_1 < \dfrac{\epsilon}{3}$$ for all $N \geq N_1$.

Then choose $N_2\in\N$ such that
$$\dfrac{5n^2}{N}||V||_\infty (t-t_0)+\sum\limits_{k=1}^m (4||V||_\infty)^k \sum\limits_{j_1=1}^n\sum\limits_{j_2=1}^{n+1}\cdots\sum\limits_{j_k=1}^{n+k-1} \dfrac{5(n+k)^2}{N}||V||_\infty\dfrac{(t-t_0)^{k+1}}{(k+1)!} < \dfrac{\epsilon}{3}$$ for all $N\geq N_2$. For $N\geq \max\{N_1,N_2\}$,
$$||E_{n,N}(t)||_1<\epsilon,$$ i.e. $\lim\limits_{N\rightarrow\infty}||E_{n,N}(t)||_1=0$ for all $t\in[t_0,t_0+\delta_{T_0}]$. Since, for $n=1$,\\ $\left|\left|E_{1,N}(t)\right|\right|_1=\left|\left|\rho_N^{(1)}(t)-\rho(t)\right|\right|_1\xrightarrow[N\rightarrow\infty]{}0$ and $\rho_N^{(1)}$ are density operators for all $N$, we now obtain that $\rho(t)\in\mathcal{D}(\mathbb{H})$ for all $t\in[t_0,t_0+\delta_{T_0}]$. Therefore $\rho_N^{(n)}(t)$ is $\rho(t)$-chaotic for all $t\in[t_0,t_0+\delta_{T_0}]$.
\end{proof}

\begin{Rmk}
In the beginning of the previous proof we only knew from \cite{BoveDaPratoFano} that the solution $\rho(t)$ to the Hartree equation~(\ref{HartreeEq}) is a self-adjoint trace class operator and we set $K_{T_0}=\sup\{||\rho(t)||_1:t\in [T_0,T_0+1]\}$ for each $T_0\in\mathbb{N}\cup\{0\}$ where $\rho(t)$ is the solution to the Hartree equation~(\ref{HartreeEq}). At the end of the previous proof, we concluded that $\rho(t)$ is a positive and trace $1$ operator, hence a density operator for every $t\geq 0$. Thus $K_{T_0}=1$ for all $T_0\in\mathbb{N}\cup\{0\}$, and $\delta_{T_0}$ which is defined in the beginning of the previous proof can be picked infinitesimally close to $\dfrac{1}{4||V||_{\infty}}$ if $V\neq 0$. If $V=0$, then equation~(\ref{inequality1}) becomes $||E_{n,N}(t)||_1\leq ||E_{n,N}(0)||_1$. Therefore, the rate of chaos does not increase.
\end{Rmk}

\end{document}